\newtheorem{proposition}{Proposition}[section]
\newtheorem{corollary}[proposition]{Corollary}
\newtheorem{definition}[proposition]{Definition}
\newtheorem{example}[proposition]{Example}
\newtheorem{hypothesis}[proposition]{Hypothesis}
\newtheorem{lemma}[proposition]{Lemma}
\newtheorem{question}[proposition]{Question}
\newtheorem{remark}[proposition]{Remark}
\newtheorem{simulation}[proposition]{Simulation}
\newtheorem{theorem}[proposition]{Theorem}
\renewcommand*{\backref}[1]{}
\renewcommand*{\backrefalt}[4]{%
  \ifcase #1 %
    No citations.
  \or
    (Cited on page~#2).%
  \else
    (Cited on pages~#2).%
  \fi%
}
\newcommand{\C}{ \ensuremath{ \mathcal{C} } }
\renewcommand{\d}{ \ensuremath{ \mathrm{d} } }
\newcommand{\Def}{ \ensuremath{\R^{D \times D}} }
\newcommand{\e}{ \ensuremath{ \epsilon } }
\newcommand{\Energy}{ \ensuremath{ \mathcal{W} } }
\newcommand{\F}{ \ensuremath{ \mathcal{F} } }
\newcommand{\g}{ \ensuremath{ \gamma } }
\renewcommand{\l}{ \ensuremath{ \lambda } }
\renewcommand{\L}{ \ensuremath{ \mathcal{L} } }
\newcommand{\N}{ \ensuremath{ \mathbb{N} } }
\newcommand{\R}{ \ensuremath{ \mathbb{R} } }
\newcommand{\W}{ \ensuremath{ \widetilde{W} } }
\newcommand{\Z}{ \ensuremath{ \mathbb{Z} } }
\begin{document}

\begin{frontmatter}

\title{An Energy-Deformation Decomposition for Morphoelasticity}

\author[Isaac's address]{Isaac Vikram Chenchiah}
\address[Isaac's address]{School of Mathematics, University of Bristol, University Walk, Bristol BS8~1TW, UK}
\ead{Isaac.Chenchiah@bristol.ac.uk}
\ead[url]{http://www.maths.bris.ac.uk/people/profile/maivc/} 
 
\author[Patrick's address]{Patrick D. Shipman}
\address[Patrick's address]{Department of Mathematics, Colorado State University, 1874 Campus Delivery, Fort Collins, CO 80523-1874, USA}
\ead{shipman@math.colostate.edu}
\ead[url]{http://www.math.colostate.edu/~shipman}

\begin{abstract}
Mathematical models of biological growth commonly attempt to distinguish deformation due to growth from that due to mechanical stresses through a hypothesised multiplicative decomposition of the deformation gradient. Here we demonstrate that this hypothesis is fundamentally incompatible with shear-resistance and thus cannot accurately describe growing solids. Shifting the focus away from the kinematics of growth to the mechanical energy of the growing object enables us to propose an ``energy-deformation decomposition'' which accurately captures the influence of growth on mechanical energy. We provide a proof and computational verification of this for tissues with crystalline structure. Our arguments also apply to tissues with network structure. Due to the general nature of these results they apply to a wide range of models for growing systems.
\end{abstract}

\begin{keyword}
morphoelasticity \sep growth \sep multiplicative decomposition
\end{keyword}

\end{frontmatter}

\section{Introduction}

\subsection{Background}

Biological growth---of cells, tissues, organs and organisms---leads to morphological change as well as mechanical stresses such as tension in arteries (see, for example,~\cite{Holzapfel:2010}) and plant stems (for example,~\cite{Goriely:2006,Vandiver:2008}).  These play an important role in biological function.  Modeling biological growth and the accompanying mechanical stresses is of increasing interest in the biological, continuum mechanics and mathematical communities as it becomes abundantly clear that mechanical stresses are not only passive responses to growth, but also feed back to influence morphological development as well as biochemical pathways; see, for example,~\cite{Lintilhac:1984,Lynch:1997,Huang:2004}.  

Continuum models for growth typically draw from approaches first developed in the context of non-biological continua, for example, plasticity theory or mixture theory; see~\cite{Goriely:2008,Ambrosi:2011} for reviews. A very popular current approach is based on a multiplicative decomposition of the deformation gradient.  The idea, introduced by~\cite{Rodriguez:1994}, is as follows:  Suppose that a map $\varphi$ describes the deformation of a body from a reference configuration to a current configuration.  In the absence of growth, the elastic energy density of the current configuration is a function $W$ of the deformation gradient  $F := D \varphi$.  If, however, growth has also contributed to the deformation,  the mechanical energy is a function of only that part of the deformation resulting in elastic stresses.  Analogously to a standard approach in plasticity theory (\cite{Lee:1969}), the approach of multiplicative decomposition posits that the gradient $F$ is a product 
\begin{subequations} \label{eq:md}
\begin{equation} \label{eq:md1}
F = A \ G
\end{equation} 
of  tensors $G$, arising from growth, and $A$, arising from elastic deformation. (The decomposition~\eqref{eq:md1} is understood to hold point-wise in space-time.) There is not, in general, a deformation of the body with gradient $G$; that is, growth may not be compatible with any actual deformation.  The elastic deformation associated with $A$ restores compatibility (that is, the gradient nature) of $F$.    The elastic energy is a function only of $A$: 
\begin{equation} \label{eq:md2}
W = W(F G^{-1}).   
\end{equation}
\end{subequations}

\subsection{Main results}

From~\eqref{eq:md2}, the multiplicative decomposition can be understood as saying that energetically growth is a \emph{motion} (in its domain, by $G$) of the energy density $W$. Here we present a (in our view) compelling argument that this cannot be true in the presence of shear resistance (see Definition~\ref{def:shears} below) and non-dilational growth (defined in Section~\ref{sec:notation} below). Moreover we also propose a way to overcome this: that in addition to the deformation gradient being decomposed (multiplicatively), the energy density be decomposed (additively). 

This is done in Theorems~\ref{thm:Z^2_2} and~\ref{thm:Z^D}, and Corollary~\ref{cor:Z^D}. These focus on growing systems with a crystalline structure; we explain the reason for this below. However our insights and arguments are general and we discuss growing networks and growing continua in Section~\ref{sec:continua}. We highlight three features of our approach:

First, we study continua via discrete systems: We begin with discrete systems because they provide a context in which the concept of growth can be clearly and unambiguously formulated, and where the interaction of growth with deformation can be rigorously derived. To understand growth on a continuum level, we compare the continuum limits of the initial and the grown system. (However, in some instances, discrete systems themselves are appropriate models for growth with the discrete elements corresponding, for example, to biological cells, see~\cite{Odell:1981,Weliky:1990,Munoz:2010}.)

Second, we shift attention away from the kinematics of growth, cf.,~\eqref{eq:md1}, to the energetics of growth, cf.,~\eqref{eq:md2}. We submit that not only is this a better conceptual approach to growth but also facilitates the use of more sophisticated mathematical tools such as discrete-to-continuum limits and, more generally, the tools of the calculus of variations. In addition it enables a closer integration of the mechanics of growth with the biochemistry of growth, and thus permits the development of more holistic models of biological growth.

Third, we recognise that there are different kinds of growth processes, two of which we discuss in Section~\ref{sec:springs} below.

\subsection{Organisation of the paper}

We begin, in Section~\ref{sec:springs}, by considering \emph{growable springs}. These are one-dimensional elastic objects which are able to change their rest-length by a non-elastic process, namely \emph{growth}. There is more than one way in which this can occur; we present two ways, which we name \emph{replication} (Section~\ref{sec:replication}) and \emph{recombination} (Section~\ref{sec:recombination}). Of these replication, which involves mass transfer, more naturally corresponds to an intuitive understanding of growth. However, recombination, which is a constant-mass process, may be thought of as a one-dimensional conceptualisation of the biological process sometimes called ``remodelling'', see, for example,~\cite{Taber:1995,Taber:2001,Ambrosi:2011,Menzel:2012}. Recombination is also simpler to analyse mathematically so we focus on it in this paper. However, as a reader who follows our arguments can easily see, our claim that the multiplicative decomposition cannot adequately describe growing systems is true for generic growth processes including replication. Similarly, while we develop our alternate approach of ``energy-deformation decomposition'' in the context of recombination, the insight underlying it applies to generic growth processes (albeit the resulting formulation might be less elegant for replication).

In Section~\ref{sec:lattices} we introduce (node-spring) \emph{lattices}, which provide our model of a biological system. We do this, not because lattices are good models for biological systems but rather because they present the simplest context in which we can communicate our insights and develop our arguments. It would have been biologically more natural to use (node-spring) \emph{networks} but we judged that the resulting need for more detailed mathematical analysis would have obscured rather than clarified the essential features. After this, in Section~\ref{sec:1D}, we introduce a one-dimensional example which serves as a concrete context in which to introduce the questions that would occupy our attention.

Next we come to the heart of the paper: In Section~\ref{sec:homogeneous_lattices_recombining_homogeneously} we present our main results (in the context of homogeneous lattices growing homogeneously) and arguments (which are much broader in scope). Section~\ref{sec:homogenisation} extends these results to inhomogeneous situations by demonstrating that the energy-deformation decomposition is stable under homogenisation, both of growth and of elasticity. While Sections~\ref{sec:homogeneous_lattices_recombining_homogeneously} and~\ref{sec:homogenisation} considered continuum limits of lattices, in Section~\ref{sec:continua} we briefly touch upon the application of our insights to networks and continua.

\subsection{Notation and definitions}
\label{sec:notation}

$\Z$ is the set of integers, $\N$ the set of positive integers and $\R_+$ is the set of non-negative real numbers.
We set
\begin{align*}
\Z^D_N
 &:= \Z^D \cap [0,N]^D, \\
\partial \Z^D_N 
 &:= \Z^D \cap \partial [0,N]^D,
\end{align*} 
where $D$ is the dimension of the space and $N \in \N$. We denote the standard Euclidean basis in $\R^D$ by $\{e_1, e_2, \dots e_D\}$. For brevity we set 
\begin{align*}
e_{i \pm j}
 &:= e_i \pm e_j, \\
e_{i \pm j \pm k}
 &:= e_i \pm e_j \pm e_k
\end{align*} 
for $i,j,k = \{1,2,3\}$. $\langle \cdot, \cdot \rangle$ is the standard Euclidean inner product on $\R^D$.
For $\theta \in \R$,
\begin{equation*}
R_\theta := \begin{pmatrix} \cos \theta & -\sin \theta \\ \sin \theta & \cos \theta \end{pmatrix} \in SO(2).
\end{equation*}
We denote the identity matrix (in any dimension) by $I$. An \emph{isotropic} matrix is one that is a multiple of the identity. A \emph{dilation} is a deformation whose gradient is isotropic.

The cardinality of a set $S$ is $|S|$.

We use the term \emph{shear}, and the related phrases \emph{vanishes on a shear} and \emph{shear-resisting}, in a specific sense in this paper:

\begin{definition}[Shears] \label{def:shears}
(A non-trivial deformation gradient) $F \in \Def \setminus SO(D)$, is a shear if
\begin{equation*}
\| \cdot \| \circ F
 = \| \cdot \|
\end{equation*}
on a basis of $\R^D$. That is, there exists a basis $\{ v_1, \cdots, v_D \}$ of $\R^D$ such that
\begin{equation*}
\| F v_i \|
 = \| v_i \| \quad \forall i = 1, \dots, D.
\end{equation*}
Now let $W \colon \Def \to \R$ with $W(I) = \inf W$ .
\begin{enumerate}
\item $W$ vanishes on a shear if there exists a shear $F$ such that $W(F) = W(I)$.
\item $W$ is shear-resisting (or resists shears) if there is no shear $F$ for which $W(F) = W(I)$.
\end{enumerate}
\end{definition} 

Note that if $F \in \Def$ is a shear then so is $SO(2) \ F \ SO(2)$.

\section{Growable springs}
\label{sec:springs}

Our discrete systems are built of \emph{growable springs} which are one-dimensional objects which are characterised by a rest-length, $\ell$, and a current length, $e$. In addition to \emph{elastic deformation}, which changes $e$ but not $\ell$, the springs can also \emph{grow}, i.e., change $\ell$. 

The elastic energy $\W$ of such a spring depends on both $\ell$ and $e$. The nature of this dependence itself depends on the nature of the growth process being considered. In a moment we consider two kinds of growth processes, replication and recombination. However elasticity motivates the following constraints on $\W$:

We require that when $e=\ell$, $\W$ attains its minimum, which for convenience we set to be $0$. Further we ask that this state be a stable elastic equilibrium:
\begin{subequations} \label{eq:assumptions}
\begin{align}
\W(\ell,\ell) & = 0, \\
\frac{\partial}{\partial e} \W(\ell,\ell) &= 0, \\
\frac{\partial^2}{\partial e^2} \W(\ell,\ell) &\geqslant 0.
\end{align}
\end{subequations}
More generally we require that $\W(\ell,\cdot)$ be convex.

Note that $\W$ is the elastic energy stored in the spring. Any energy expended in growth can be accounted for by the inclusion of a \emph{growth energy}
which can be added to $\W$ to yield the total energy.

Next we consider two kinds of growth processes which we name replication and recombination. Although an analysis of mass-transfer is outside the scope of this work it is natural to think of recombination as a mass-preserving process and replication as changing the mass of the system. However here we are concerned, not with the underlying mechanisms which accomplish growth, but only with the interaction of growth (of either kind) with elasticity.

\subsection{Replication}
\label{sec:replication}

We introduce replication through an example; see Figure~\ref{fig:replication}.
\begin{figure}
\begin{center}
\includegraphics[scale=0.3]{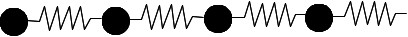}\\
$\downarrow$\\
\includegraphics[scale=0.3]{1D-Spring-System-1.jpg}\includegraphics[scale=0.3]{1D-Spring-System-1.jpg}
\caption{Replication}
\label{fig:replication}
\end{center}
\end{figure}

Consider two systems. The first, the \emph{initial system} consists of a single spring of rest length $\ell$. Its elastic energy is $W_i(\cdot) = \W(\ell,\cdot)$. The other, the \emph{grown system}, consists of two such springs attached in series. Its (total) rest-length is $2\ell$. Let its elastic energy be $W_g = \W(2\ell,\cdot)$.  

Let the grown system be elastically deformed to a (total) length of $2e$. Since $\W(\ell,\cdot)$ is convex, an energetically optimal state of the grown system is when each spring deforms to length $e$. That is,
\begin{equation*}
W_g(2e) = \W(2\ell,2e) = 2 \W(\ell,e) = 2W_i(e).
\end{equation*}
In other words, the grown system is energetically equivalent to two copies of the initial system. We call the corresponding growth process (i.e., the process that transforms the initial system into the grown system) a \emph{replication}. More generally:

\begin{definition}[Replication]
A replication is a growth process for which the elastic energy is a (positively) $1$-homogeneous function: For all $\eta, e, \ell >0$,
\begin{subequations}
\begin{equation} \label{eq:replication}
\W(\eta \ell,\eta e) = \eta \W(\ell,e).
\end{equation}
Let $W(\cdot) := \W(1,\cdot)$. From the $1$-homogenity of $\W$,
\begin{equation} \label{eq:replication-energy}
\W(\ell,e) = \ell \ W(\frac{e}{\ell}).
\end{equation}
\end{subequations}
\end{definition}

Note that the mass of the grown system is $\ell$ times the mass of the initial system.

\begin{remark}
Note that $1$-homogenity requires that the minimum elastic energy be $0$:
\begin{equation*}
\min_{\ell,e} \W(\ell,e) = \W(\ell,\ell) = \ell W(1).
\end{equation*}
Since this is true for all $\ell>0$ we conclude that $W(1) = 0$.

Moreover convexity of $\W(\ell,\cdot)$ implies convexity of $W$. From~\eqref{eq:assumptions},
\begin{alignat*}{2}
\frac{\partial}{\partial e} \W(\ell,\ell) &= 0
 &\implies W'(1) &= 0, \\
\frac{\partial^2}{\partial e^2} \W(\ell,\ell) &\geqslant 0
 &\implies W''(1) &\geqslant 0. 
\end{alignat*}
Note that
\begin{align*}
\frac{\partial}{\partial \ell} \W(\ell,\ell) &= W(1) - W'(1) = 0, \\
\frac{\partial^2}{\partial \ell^2} \W(\ell,\ell) &= \frac{1}{\ell} W''(1) \geqslant 0.
\end{align*}
Thus $e=\ell$ is a stable equilibrium for the growth process.
\end{remark}

\begin{example}[Hookean springs]
A Hookean spring has energy
\begin{equation*}
\frac{1}{2} k(\ell) (e - \ell)^2.
\end{equation*}
where $k(\ell)$ is the spring constant of a spring of rest-length $\ell$. Considering (as above) such springs joined in series leads to the well-known scaling $k(\ell) = \ell^{-1} k(1)$. Thus
\begin{equation*}
\W(\ell,e) = \ell \frac{1}{2} k(1) (\frac{e}{\ell} - 1)^2,
\end{equation*}
which corresponds to
\begin{equation*}
W(\cdot) = \frac{1}{2} k(1) (\cdot - 1)^2.
\end{equation*}
\end{example}

\begin{remark}[Linearisation]
Let $\ell_o$ be the initial length of a spring. For small growth $\ell = \ell_o + \g \ell_o$ (with $\g \ll 1$) and small deformation $e = \ell_o + \e \ell_o$ (with $\e \ll 1$) we have
\begin{equation*}
\W(\ell_o + \g \ell_o,\ell_o + \e \ell_o)
 = (\ell_o + \g \ell_o) W \left( \frac{\ell_o + \e \ell_o}{\ell_o + \g \ell_o} \right) \\
 \approx \ell_o W(1+\e-\g) \\
 \approx \frac{\ell_o}{2} W''(1) (\e - \g)^2.
\end{equation*}
\end{remark}

\subsection{Recombination}
\label{sec:recombination}

As with replication, we introduce recombination through an example; see Figure~\ref{fig:recombination}.
\begin{figure}
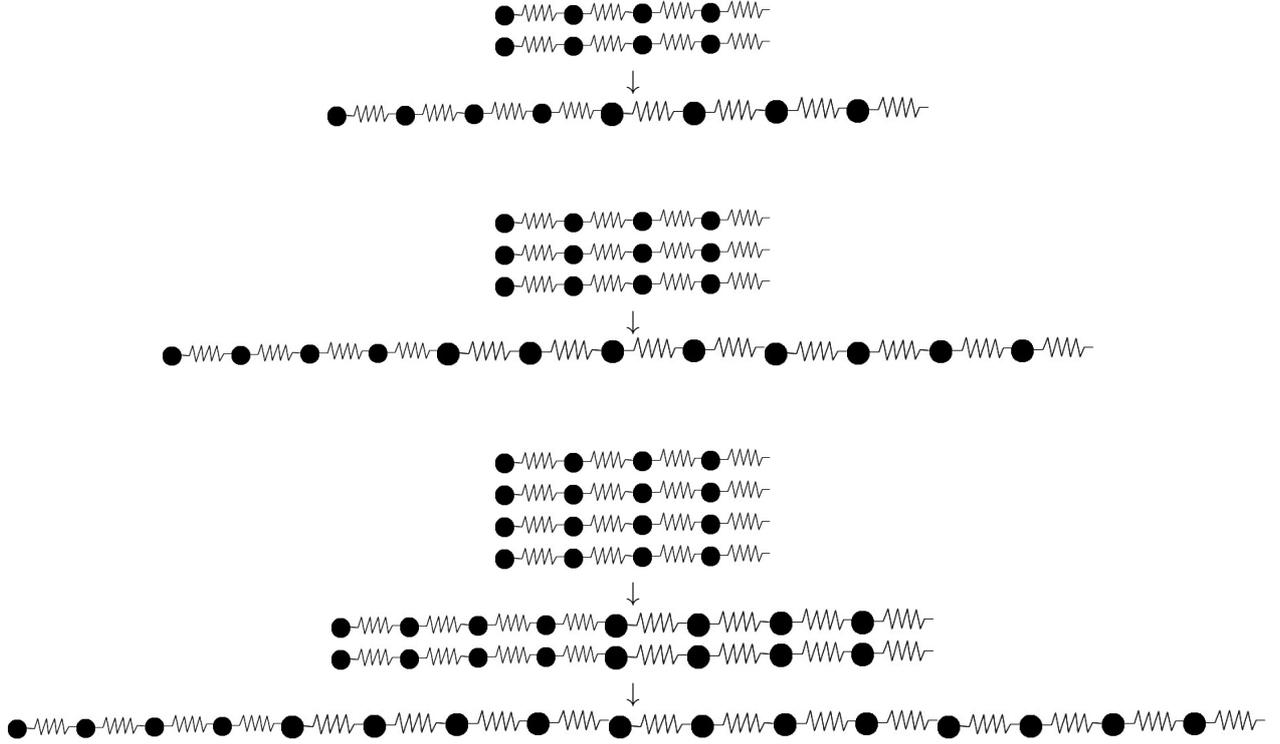

\begin{center}
\includegraphics[scale=0.25]{1D-Spring-System-1.jpg}\\
\includegraphics[scale=0.25]{1D-Spring-System-1.jpg}\\
$\downarrow$\\
\includegraphics[scale=0.25]{1D-Spring-System-1.jpg}\includegraphics[scale=0.3]{1D-Spring-System-1.jpg} \vspace{1cm} \\
\includegraphics[scale=0.25]{1D-Spring-System-1.jpg}\\
\includegraphics[scale=0.25]{1D-Spring-System-1.jpg}\\
\includegraphics[scale=0.25]{1D-Spring-System-1.jpg}\\
$\downarrow$\\
\includegraphics[scale=0.25]{1D-Spring-System-1.jpg}\includegraphics[scale=0.3]{1D-Spring-System-1.jpg}\includegraphics[scale=0.3]{1D-Spring-System-1.jpg} \vspace{1cm} \\
\includegraphics[scale=0.25]{1D-Spring-System-1.jpg}\\
\includegraphics[scale=0.25]{1D-Spring-System-1.jpg}\\
\includegraphics[scale=0.25]{1D-Spring-System-1.jpg}\\
\includegraphics[scale=0.25]{1D-Spring-System-1.jpg}\\
$\downarrow$\\
\includegraphics[scale=0.25]{1D-Spring-System-1.jpg}\includegraphics[scale=0.3]{1D-Spring-System-1.jpg}\\
\includegraphics[scale=0.25]{1D-Spring-System-1.jpg}\includegraphics[scale=0.3]{1D-Spring-System-1.jpg}\\
$\downarrow$\\
\includegraphics[scale=0.25]{1D-Spring-System-1.jpg}\includegraphics[scale=0.3]{1D-Spring-System-1.jpg}\includegraphics[scale=0.3]{1D-Spring-System-1.jpg}\includegraphics[scale=0.3]{1D-Spring-System-1.jpg}
\caption{Recombinations}
\label{fig:recombination}
\end{center}
\end{figure}

Consider two systems. The first, the \emph{initial} system consists of two springs in parallel, each of rest length $\ell$. Its elastic energy is $W_i(\cdot) = 2\W(\ell,\cdot)$.

The other, the \emph{grown} system also consists of two such springs, but attached in series. Its (total) rest-length is $2\ell$. Let its elastic energy be $W_g(\cdot) = \W(2\ell,\cdot)$.

Let the grown system be elastically deformed to a (total) length of $2e$. As before, the optimal configuration is for each spring to deform to length $e$. Thus,
\begin{equation*}
W_g(2e) = \W(2\ell,2e) = 2 \W(\ell,e) = W_i(e).
\end{equation*}
In other words, the grown system is energetically equivalent to the initial system. We call the corresponding growth process a \emph{recombination}. More generally:

\begin{definition}[Recombination]
A recombination is a growth process for which the elastic energy is a (positively) $0$-homogeneous function: For all $\eta, e, \ell >0$,
\begin{subequations}
\begin{equation} \label{eq:recombination}
\W(\eta \ell,\eta e) = \W(\ell,e).
\end{equation}
Let $W(\cdot) := \W(1,\cdot)$. From the $0$-homogenity of $\W$,
\begin{equation} \label{eq:recombination-energy}
\W(\ell,e) = W(\frac{e}{\ell}).
\end{equation}
\end{subequations}
\end{definition}

\begin{remark}
As before convexity of $\W(\ell,\cdot)$ implies convexity of $W$, and, from~\eqref{eq:assumptions},
\begin{alignat*}{2}
\W(\ell,\ell) & = 0 
 &\implies W(1) & = 0, \\
\frac{\partial}{\partial e} \W(\ell,\ell) &= 0
 &\implies W'(1) &= 0, \\
\frac{\partial^2}{\partial e^2} \W(\ell,\ell) &\geqslant 0
 &\implies W''(1) &\geqslant 0. 
\end{alignat*}
Note that
\begin{align*}
\frac{\partial}{\partial \ell} \W(\ell,\ell) &= - \frac{1}{\ell} W'(1) = 0, \\
\frac{\partial^2}{\partial \ell^2} \W(\ell,\ell) &= \frac{1}{\ell^2} W''(1) \geqslant 0.
\end{align*}
Thus $e=\ell$ is a stable equilibrium for the growth process.
\end{remark}

\begin{example}[Hookean springs]
Consider a rearrageable spring system which consists of $N$ Hookean springs, each with spring constant $k(\ell_o)$ and rest-length $\ell_o$. When the rest-length of the system is $\ell < N \ell_o$, the system consists of $\frac{N \ell_o}{\ell}$ parallel systems of $\frac{\ell}{\ell_o}$ springs in series. When the total length of the system is $e$ the energy of the system is 
\begin{align*}
\W(\ell,e) &= \frac{N}{2} k(\ell_o) \ell_o^2 (\frac{e}{\ell} - 1)^2
\end{align*}
which corresponds to
\begin{equation*}
W(\cdot) = \frac{N}{2} k(\ell_o) (\cdot - 1)^2.
\end{equation*}
\end{example}

\begin{remark}[Linearisation]
Let $\ell_o$ be the initial length of a spring. For small growth $\ell = \ell_o + \g \ell_o$ (with $\g \ll 1$) and small deformation $e = \ell_o + \e \ell_o$ (with $\e \ll 1$) we have
\begin{equation*}
\W(\ell_o + \g \ell_o,\ell_o + \e \ell_o)
 = W \left( \frac{\ell_o + \e \ell_o}{\ell_o + \g \ell_o} \right) \\
 \approx W(1+\e-\g) \\
 \approx \frac{1}{2} W''(1) (\e - \g)^2.
\end{equation*}
\end{remark}

\subsection{Generic growth}

Replication and recombination are special cases of growth processes that can be represented by a (positively) $p$-homogeneous energy: 
\begin{equation*}
\W(\ell,e) = \ell^p W(\frac{e}{\ell})
\end{equation*}
with recombination corresponding to $p=0$ and replication to $p=1$.

When the rest-length of such a spring changes from $\ell_i$ to $\ell_g$ the elastic energy density changes from
\begin{equation*}
W_i(\cdot)
 = \ell_i^p W(\frac{\cdot}{\ell_i})
\end{equation*} 
to
\begin{equation*}
W_g(\cdot)
 = \ell_g^p W(\frac{\cdot}{\ell_g}).
\end{equation*}
Thus
\begin{equation*}
W_g(\cdot)
 = \ell_g^p W(\frac{\cdot}{\ell_g})
 = \left( \frac{\ell_g}{\ell_i} \right)^p \ell_i^p W \left( \frac{\cdot}{\ell_i} \frac{\ell_i}{\ell_g} \right)
 = g^p W_i \left( \cdot g^{-1} \right).
\end{equation*}
where $g = \frac{\ell_g}{\ell_i} > 0$ is the \emph{growth} of the spring. Note that we allow $g \in (0,1)$.

\section{Preliminaries}

\subsection{Lattices}
\label{sec:lattices}

While it excludes, for example, triangular and hexagonal unit cells, the following restricted definition of lattices suffices for our purposes:

\begin{definition}[Lattices]
A lattice is a triple $(\Z^D,\C,\L)$ where
\begin{enumerate}
\item the connectivity $\C \subset \Z^D$ is finite and satisfies
\begin{equation} \label{eq:local10}
v \in \C \implies -v \notin \C,
\end{equation}
and
\item the rest-length $\L$ is a function 
\begin{equation*}
\left\{ (x,y) \in \Z^D \times \Z^D\ |\ x-y \in \C \text{ or } y-x \in \C \right\} \to (0,\infty).
\end{equation*}
\end{enumerate}

We think of a lattice as a node-spring system where the nodes are points in $\Z^D$ and $x,y \in \Z^D$ are connected by a spring iff $x-y \in \C$ or $y-x \in \C$, in which case, for brevity, we call $(x,y) \in \Z^D \times \Z^D$ a spring. The rest length of spring $(x,y)$ is given by $\L(x,y)$. Where $\L$ is irrelevant we speak of the lattice $(\Z^D,\C)$.

For $v \in \C$ we say that $v$ is nearest-neighbour if $\| v \|$ = 1 and next-nearest neighbour if $\| v \| = \sqrt{2}$. 
\end{definition}

(The reason for the restriction~\eqref{eq:local10} will become clear in Definition~\ref{def:lattice-order} below.) The simplest lattices are those that are homogeneous:

\begin{definition}[Homogeneous lattices]
A lattice is homogeneous if $\L$ is translation-invariant, that is, $\L(x,y)$ is a function only of $x-y$. In this case we view $\L$ as a function on $\C \cup -\C$ and write $\L(x-y)$ for $\L(x,y)$.
\end{definition}

A critical property of a lattice is its order:

\begin{definition}[Order of a lattice] \label{def:lattice-order}
The order of a lattice $(\Z^D,\C)$ is the smallest integer $K \in \N$ for which there exists a partition $\C_1, \dots, \C_K$ of $C$ such that each $\C_i$ is a set of linearly-independent vectors in $\R^D$. (Recall that a partition of a set $S$ is a collection of subsets $S_1,\dots,S_K$ of $S$ which are mutually disjoint and whose union is $S$.) Note that $K \geqslant \frac{|\C|}{D}$.
\end{definition}

The central object of our study is the continuum energy density of a lattice:

\begin{definition}[Continuum energy density of a lattice]
Let $(\Z^D,\C,\L)$ be a lattice. Then its continuum energy density $\Energy_{(\C,\L)} \colon \Def \to \R$ is given by
\begin{equation} \label{eq:continuum-energy}
\Energy_{(\C,\L)}(F)
 := \lim_{N \to \infty} \frac{1}{N^D}
     \inf_{ \substack{ u \colon \Z^D_N \to \R^D \\ u(x)|_{\partial \Z^D_N} = Fx } } \
     \sum_{ \substack{ x,y \in \Z^D_N \\ x-y \in \C } } 
     W \left( \frac{\| u(x) - u(y) \|}{\L(x,y)} \right),
\end{equation}
if it exists.
\end{definition}

Since they concern elasticity and not growth per se, we do not here entertain questions regarding the existence and form of $\Energy_{(\C,\L)}$ but refer the reader to~\cite{E:2007}.

Henceforth we assume that the lattices we consider possess continuum energy densities. In addition we assume that the homogeneous lattices  we consider satisfy the Cauchy-Born rule, see~\cite{E:2007}. It follows that the continuum energy density of a homogeneous lattice is of the form
\begin{equation} \label{eq:CB-energy}
\Energy_{(\C,\L)}(F)
 = \sum_{v \in \C} W \left( \frac{\| F v \|}{\L(v)} \right).
\end{equation}
We shall refer to such a continuum energy density as a \emph{Cauchy-Born continuum energy density}.

We also prefer to avoid questions that concern elastic homogenisation and not growth per se. To do this it is convenient to consider homogeneous representatives of inhomogeneous lattices:

\begin{definition}[Homogeneous representative of an inhomogeneous lattice] \label{def:homogeneous_representative}
Let $(\Z^D,\C,\L)$ be an (inhomogeneous) lattice and $(\Z^D,\C,\overline{\L})$ a homogeneous lattice. $(\Z^D,\C,\overline{\L})$ is a homogeneous representative of $(\Z^D,\C,\L)$ if their continuum energies agree, that is, if
\begin{equation*}
\Energy_{(\C,\L)}
 = \Energy_{(\C,\overline{\L})}.
\end{equation*}
(These energies are given by~\eqref{eq:continuum-energy} and~\eqref{eq:CB-energy} above.)
\end{definition}

The following Lemma can be understood as saying that the continuum energy density of a homogeneous lattice describes a solid only if its order is greater than one:

\begin{lemma}
The Cauchy-Born continuum energy density of a homogeneous lattice is shear-resisting only if the order of the lattice is greater than one.
\end{lemma}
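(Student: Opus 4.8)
The plan is to prove the contrapositive: if the lattice has order one, then its Cauchy-Born continuum energy density vanishes on a shear. Recall that order one means there is a single partition class, i.e., the connectivity $\C$ is itself a set of linearly-independent vectors in $\R^D$. Since $|\C| \leqslant D$ in this case, write $\C = \{v_1, \dots, v_m\}$ with $m \leqslant D$ and the $v_i$ linearly independent. The Cauchy-Born energy is then $\Energy_{(\C,\L)}(F) = \sum_{i=1}^m W(\|F v_i\|/\L(v_i))$, and this attains its minimum value $0$ (recall $W(1)=0$ from~\eqref{eq:recombination-energy} and the assumptions) precisely when $\|F v_i\| = \L(v_i)$ for each $i$; in particular $F = I$ works since $\|v_i\|/\L(v_i)$ need not be $1$... so I must be slightly careful here. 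Actually $\Energy_{(\C,\L)}(I) = \sum_i W(\|v_i\|/\L(v_i))$, which is the infimum over $F$ only if each $\|v_i\| = \L(v_i)$; in general the minimiser $F$ of the continuum energy is the one with $\|Fv_i\| = \L(v_i)$. Replacing $F$ by $R F_0^{-1}$-type normalisation, I would reduce to the case $\L(v_i) = \|v_i\|$ so that $F=I$ is a minimiser and $W(I) = \inf W$ as required by Definition~\ref{def:shears}; this is legitimate since shear-resistance is a property of the energy landscape near its minimum.

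The key step is then to exhibit a shear $F \neq SO(D)$ with $\Energy_{(\C,\L)}(F) = 0$. Since $v_1, \dots, v_m$ are linearly independent, extend them to a basis $\{v_1, \dots, v_m, v_{m+1}, \dots, v_D\}$ of $\R^D$. I want $F$ to fix the norm of every basis vector — $\|Fv_i\| = \|v_i\|$ for all $i = 1, \dots, D$ — so that $F$ is a shear, while also having $\|Fv_i\| = \L(v_i) = \|v_i\|$ for the connectivity vectors $i = 1, \dots, m$ so that every spring term vanishes and $\Energy_{(\C,\L)}(F) = \sum_i W(1) = 0$. The second condition is implied by the first, so it suffices to construct a single $F \in \Def \setminus SO(D)$ preserving the norm of each $v_i$.

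The construction: if $m < D$, we have genuine freedom — e.g. take $F$ to act as the identity on $\mathrm{span}\{v_1,\dots,v_m\}$ and as a rotation (by a fixed nontrivial angle, or a reflection) on a complementary subspace; such an $F$ preserves all the norms but is not in $SO(D)$ provided the complement is chosen transverse to the $v_i$ appropriately, or more robustly, one rotates a $2$-plane spanned by $v_D$ and a vector not among the $v_i$. If $m = D$, so $\C$ is a full basis, then I would pick two of the vectors, say $v_1, v_2$, and let $F$ be the map that rotates the plane $\mathrm{span}\{v_1,v_2\}$ by a small angle $\theta$ while fixing $v_3,\dots,v_D$; this preserves $\|v_1\|, \|v_2\|$ only if $v_1 \perp v_2$, which need not hold, so instead I would use the map $v_1 \mapsto v_2 \|v_1\|/\|v_2\|$-style swaps or, cleanly, the reflection through the hyperplane bisecting $v_1$ and (the appropriately scaled) $v_2$, composed with a scaling, chosen so that $\|Fv_1\| = \|v_1\|$ and $\|Fv_2\| = \|v_2\|$ while $F \notin O(D)$ acts nontrivially; the point is that fixing only $D$ length-constraints $\|Fv_i\| = \|v_i\|$ is $D$ scalar equations on the $D^2$-dimensional $\Def$, leaving ample room for non-isometric, hence non-$SO(D)$, solutions as long as $D \geqslant 2$.

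The main obstacle I anticipate is the book-keeping when $m = D$: one genuinely needs to verify that the $D$ constraints $\|Fv_i\| = \|v_i\|$ do not force $F \in SO(D)$, i.e., that a matrix matching the lengths of one basis need not be a rotation. This is clearly true (a shear that is not a rotation exists by the very Definition~\ref{def:shears}, since that definition would be vacuous otherwise), but the paper will want an explicit $F$; the natural choice is $F = R_\theta$-type on a $2$-dimensional coordinate block after a change of basis sending $(v_1, v_2)$ to an orthogonal pair — or simpler, the observation that the shear $\begin{psmallmatrix}1 & s\\ 0 & 1\end{psmallmatrix}$ in suitable coordinates preserves the norm of its first column and, with $s$ tuned, of a second chosen vector. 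I would present one such $F$ and note $\Energy_{(\C,\L)}(F) = 0$, completing the contrapositive. A closing remark would observe that the converse fails — order $> 1$ is necessary but not sufficient for shear-resistance — which motivates the more refined hypotheses imposed later in the paper.
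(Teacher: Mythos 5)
Your overall strategy is exactly the paper's: argue the contrapositive, note that for an order-one lattice $\C=\{v_1,\dots,v_m\}$ is linearly independent with $m\leqslant D$, extend to a basis of $\R^D$, and exhibit an $F\in\Def\setminus SO(D)$ preserving the norm of every basis vector, so that $\Energy(F)=\Energy(I)$ term by term. (Incidentally, this last observation makes your opening digression about normalising the rest-lengths unnecessary: Definition~\ref{def:shears} only asks for $\Energy(F)=\Energy(I)$, and $\|Fv_i\|=\|v_i\|$ gives that directly whatever the $\L(v_i)$ are.)

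The gap is that the one substantive step --- actually producing such an $F$ --- is never completed, and one of your proposed constructions fails. For $m<D$, ``identity on $\mathrm{span}\{v_1,\dots,v_m\}$ and a rotation on a complementary subspace'' lands in $SO(D)$ whenever the complement is orthogonal and the block is a proper rotation (e.g.\ $D=3$, $m=1$: rotating the plane orthogonal to $v_1$ gives a genuine rotation), and a shear must by definition lie outside $SO(D)$. For $m=D$ you correctly identify the difficulty but resolve it only by a dimension count and by the remark that shears exist ``by the very Definition'' --- which is circular here, since you need a shear adapted to the \emph{specific} basis $\{v_1,\dots,v_D\}$, and conjugating a shear for one basis does not produce one for another. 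The paper's construction closes this in one stroke and handles both cases uniformly: choose $R_C\in SO(D)$ with $\langle v_1,R_Cv_2\rangle\neq\langle v_1,v_2\rangle$ (possible since the $SO(D)$-orbit of $v_2$ is a full sphere) and set $F_Cv_1=v_1$, $F_Cv_i=R_Cv_i$ for $i\geqslant 2$. Norms of all basis vectors are preserved trivially, while the changed inner product forces $F_C\notin SO(D)$. You should adopt some such explicit construction; as it stands the proof does not go through.
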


\begin{proof}
Consider a homogeneous lattice $(\Z^D,\C)$ of order $1$ with $\C = \{ v_1,v_2,\dots,v_{|\C|} \}$. Then its Cauchy-Born continuum energy density, $\Energy$, is of the form
\begin{equation*}
\Energy(F) = \sum_{i=1}^{|\C|} W \left( \frac{\| F v_i\|}{L_i} \right)
\end{equation*}
for some $L_i$, $i=1,\dots,|\C|$. We show that $\Energy$ vanishes on a shear.

Since the lattice is of order 1, $|\C| \leqslant D$. If $|\C| < D$ we extend $\C$ to a basis $\tilde{\C} := \{ v_1,v_2,\dots,v_D \}$ of $\R^D$.

Pick $R_C \in SO(D)$ such that $\langle v_1, R_C v_2 \rangle \neq \langle v_1, v_2 \rangle$. Let $F_C \in \Def$ be the linear operator that satisfies
\begin{align*}
 F_C v_1 &= v_1 \\
 F_C v_i &= R_C v_i, \quad i=2,\dots,D.
\end{align*}
Note that $F_C \notin SO(D)$ since $\langle F_C v_1, F_C v_2 \rangle = \langle v_1, R_C v_2 \rangle \neq \langle v_1, v_2 \rangle$. On the other hand, clearly $\| \cdot \| \circ F = \| \cdot \|$ on $\tilde{\C}$. Thus $F_C$ is a shear and $\Energy(F_C) = \Energy(I)$. It follows that $\Energy$ is not shear resisting.
\end{proof}

\subsection{An introductory example: The one-dimensional lattice $(\Z,\{e_1\})$}
\label{sec:1D}

We begin with a one-dimensional discrete system that consists of $N+1$ nodes located on $\Z_N$ with neighbouring nodes connected by springs of rest-length $L$, see Figure~\ref{fig:Z_1}. This is the initial (discrete) system. Now let the springs grow as follows: The rest-length of the spring connecting nodes located at $j-1$ and $j$ becomes $\left( g(\frac{j}{N}) - g(\frac{j-1}{N}) \right) L$ where $g \colon [0,1] \to \R_+$ is a specified increasing function. This is the grown (discrete) system.
\begin{figure}
\begin{center}
\includegraphics[scale=0.3]{1D-Spring-System-1.jpg}\hspace{-0.8cm}\includegraphics[scale=0.3]{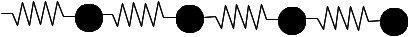}
\caption{The one-dimensional lattice $(\Z,\{e_1\})$.}
\label{fig:Z_1}
\end{center}
\end{figure}

When the displacement of the boundary of the system is specified, $u_0 = 0$ and $u_N = F N$, $F>0$, the energy of the systems is obtained by a minimisation over the displacements of the interior nodes:
\begin{align*} 
W_i^N(F)
 &:= \min_{\substack{u_j,\ j=1,\dots,N-1 \\ u_0 = 0 \\ u_N = F N}} \ \sum_{j=1}^N W \left( \frac{| u_j - u_{j-1} |}{L} \right), \\
W_g^N(F)
 &:= \min_{\substack{u_j,\ j=1,\dots,N-1 \\ u_0 = 0 \\ u_N = F N}} \ \sum_{j=1}^N \left( g(\frac{j}{N}) - g(\frac{j-1}{N}) \right)^p
  W \left( \frac{| u_j - u_{j-1} |}{ \left( g(\frac{j}{N}) - g(\frac{j-1}{N}) \right) L } \right)
\end{align*}
where $W$ is the energy density of the springs and the subscripts $i$ and $g$ denote the initial and grown systems.

It is an easy exercise to show that (with some coercivity conditions on $W$) these discrete energies converge to a continuum limit:
\begin{alignat*}{2}
W_i(F)
 &=  \lim_{N \to \infty} \frac{1}{N} W_i^N(F)
 &&= \min_{\substack{u \colon [0,1] \to \R \\ u(0)=0, u(1)=F}} \int_0^1 W \left( \frac{Du(x)}{L} \right)\ \d x \\
W_g(F)
 &= \lim_{N \to \infty} \frac{1}{N} W_i^N(F)
 &&= \min_{\substack{u \colon [0,1] \to \R \\ u(0)=0, u(1)=F}} \int_0^1 G^p(x) W \left( \frac{Du(x)}{L G(x)} \right)\ \d x \\
\end{alignat*}
where $G = g'$.

Thus the grown continuum elastic energy-density of the system $W_g$ is related to the initial continuum elastic energy-density $W_i$ through:
\begin{equation} \label{eq:F=AG}
W_g(\cdot)
 = G^p W_i(\cdot G^{-1})
\end{equation}
where $W_i \equiv W$. Thus when $p=0$ (that is, for recombination) this system satisfies the \emph{multiplicative-decomposition} first introduced in the context of morphoelasticity in~\cite{Rodriguez:1994}.

This illustrates, in the context of a simple example, the central question of this paper: How is the continuum elastic energy-density of the grown system related to that of the initial system? In particular is it related through quantities, such as $G$ in this example, that can be thought of as continuum descriptors of the growth of the system?

\section{Homogeneous lattices recombining homogeneously}
\label{sec:homogeneous_lattices_recombining_homogeneously}

\subsection{The two-dimensional lattice $(\Z^2,\{e_1,e_2,e_{1\pm2} \})$}
\label{sec:Z^2_2-homogeneous}

Next we explore the simplest two-dimensional lattice that supports elasticity, $(\Z^2,\{e_1,e_2,e_{1\pm2} \})$, the two-dimensional square lattice with nodes located at $\Z^2$ and with nearest-neighbour and next-nearest-neighbour interactions, see Figure~\ref{fig:Z^2_2}.
\begin{figure}
\begin{center}
\includegraphics[scale=0.05]{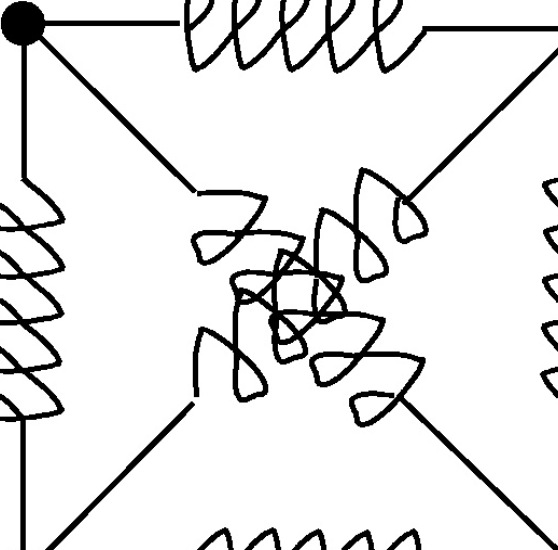}%
\includegraphics[scale=0.05]{Square-Spring-System-Truncated.jpg}%
\includegraphics[scale=0.05]{Square-Spring-System-Truncated.jpg}%
\includegraphics[scale=0.05]{Square-Spring-System-Truncated.jpg} \\ \vspace{-0.04cm}
\includegraphics[scale=0.05]{Square-Spring-System-Truncated.jpg}%
\includegraphics[scale=0.05]{Square-Spring-System-Truncated.jpg}%
\includegraphics[scale=0.05]{Square-Spring-System-Truncated.jpg}%
\includegraphics[scale=0.05]{Square-Spring-System-Truncated.jpg} \\ \vspace{-0.04cm}
\includegraphics[scale=0.05]{Square-Spring-System-Truncated.jpg}%
\includegraphics[scale=0.05]{Square-Spring-System-Truncated.jpg}%
\includegraphics[scale=0.05]{Square-Spring-System-Truncated.jpg}%
\includegraphics[scale=0.05]{Square-Spring-System-Truncated.jpg} \\ \vspace{-0.04cm}
\includegraphics[scale=0.05]{Square-Spring-System-Truncated.jpg}%
\includegraphics[scale=0.05]{Square-Spring-System-Truncated.jpg}%
\includegraphics[scale=0.05]{Square-Spring-System-Truncated.jpg}%
\includegraphics[scale=0.05]{Square-Spring-System-Truncated.jpg}%
\caption{The two-dimensional lattice $\Z^2_2$}
\label{fig:Z^2_2}
\end{center}
\end{figure}

Let the initial (i.e., at some instant during growth) rest-lengths of the springs be $L_1$, $L_2$, $L_{1+2}$ and $L_{1-2}$ for the horizontal, vertical, right-diagonal and left-diagonal springs respectively; note that these are arbitrary but fixed. At some later instant let the growth (due to recombination) in the springs be $g_1$, $g_2$, $g_{1+2}$ and $g_{1-2}$ for the horizontal, vertical, right-diagonal and left-diagonal springs respectively. The following result relates the continuum limits of the initial and grown systems:

\begin{theorem}[Homogenenously-recombined $(\Z^2,\{e_1,e_2,e_{1\pm2} \})$] \label{thm:Z^2_2}
Let the initial and homogeneously-recombined $(\Z^2,\{e_1,e_2,e_{1\pm2} \})$ lattices have Cauchy-Born continuum energy-densities $W_i$ and $W_g$ respectively. Then:
\begin{enumerate}
\item \label{it:energy-deformation} There exist $W_1, W_2 \colon \R^{2 \times 2} \to \R$ (independent of $g_1$, $g_2$, $g_{1+2}$ and $g_{1-2}$) and $G_1,G_2 \in \R^{2 \times 2}$ (independent of $L_1$, $L_2$, $L_{1+2}$ and $L_{1-2}$) such that $\forall F \in \R^{2 \times 2}$,
\begin{subequations} \label{eq:energy-deformation}
\begin{align}
W_i(F) &= W_1(F) + W_2(F), \label{eq:energy-deformation-initial} \\
W_g(F) &= W_1(F G_1^{-1}) + W_2(F G_2^{-1}). \label{eq:energy-deformation-grown}
\end{align}
\end{subequations}
\item \label{it:F!=AG} If there exists $G \in \R^{2 \times 2}$ such that $\forall F \in \R^{2 \times 2}$,
\begin{equation} \label{eq:F!=AG}
W_g(F) = W_i(F G^{-1})
\end{equation}
then, in fact the lattice has dilated under recombination and $G$ can be picked to be a multiple of the identity.
\end{enumerate}
\end{theorem}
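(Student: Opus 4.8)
\emph{Part~\ref{it:energy-deformation}} I would prove by a direct construction. By the Cauchy--Born formula~\eqref{eq:CB-energy},
\[
W_i(F)=W(\|Fe_1\|/L_1)+W(\|Fe_2\|/L_2)+W(\|Fe_{1+2}\|/L_{1+2})+W(\|Fe_{1-2}\|/L_{1-2}),
\]
and, since recombination ($p=0$) simply replaces the rest-length $L_v$ of the spring in direction $v$ by $g_vL_v$, the grown density $W_g$ is the same sum with each $L_v$ replaced by $g_vL_v$. The lattice $(\Z^2,\{e_1,e_2,e_{1\pm2}\})$ has order two, realised by the partition $\C_1=\{e_1,e_2\}$, $\C_2=\{e_{1+2},e_{1-2}\}$ of $\C$, so I would set
\[
W_1(F):=W(\|Fe_1\|/L_1)+W(\|Fe_2\|/L_2),\qquad W_2(F):=W(\|Fe_{1+2}\|/L_{1+2})+W(\|Fe_{1-2}\|/L_{1-2}),
\]
making~\eqref{eq:energy-deformation-initial} immediate, and take $G_1:=\operatorname{diag}(g_1,g_2)$ and $G_2$ to be the (symmetric) linear map determined by $G_2e_{1+2}=g_{1+2}e_{1+2}$ and $G_2e_{1-2}=g_{1-2}e_{1-2}$ (well defined since $\{e_{1+2},e_{1-2}\}$ is a basis of $\R^2$). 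Because $G_j^{-1}$ scales each $v\in\C_j$ by $g_v^{-1}$, one has $\|FG_j^{-1}v\|/L_v=\|Fv\|/(g_vL_v)$ for $v\in\C_j$, which gives~\eqref{eq:energy-deformation-grown}; the independences asserted in the statement are visible from these formulas.

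For \emph{Part~\ref{it:F!=AG}}, suppose $W_g(F)=W_i(FG^{-1})$ for all $F$. Writing both sides via~\eqref{eq:CB-energy} and using $W(\|Fv\|/c)=W(\|F(v/c)\|)$, the hypothesis becomes
\[
\sum_{v\in\C}W(\|Fw_v\|)=\sum_{v\in\C}W(\|Fw_v'\|)\qquad\forall F\in\R^{2\times2},
\]
with $w_v:=v/(g_vL_v)$ and $w_v':=G^{-1}v/L_v$. The heart of the matter is the rigidity statement: \emph{if $W$ is convex with a strict minimum at $1$ (as the shear-resisting setting of this paper requires) and $\sum_kW(\|F\xi_k\|)=\sum_kW(\|F\eta_k\|)$ for all $F\in\R^{2\times2}$, then the multisets $\{\pm\xi_k\}$ and $\{\pm\eta_k\}$ coincide.} I expect proving this lemma to be the main obstacle; the remainder is bookkeeping.

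To establish the lemma I would substitute $S:=F^{\mathsf{T}}F$, which sweeps out the full-dimensional cone of positive-semidefinite symmetric $2\times2$ matrices, and use $\|F\xi\|^2=\langle\xi\xi^{\mathsf{T}},S\rangle$. The identity then reads $\sum_k\Phi(\langle\xi_k\xi_k^{\mathsf{T}},S\rangle)=\sum_k\Phi(\langle\eta_k\eta_k^{\mathsf{T}},S\rangle)$ for all such $S$, with $\Phi(t):=W(\sqrt t)$ not affine (its minimum lies at $t=1$). By uniqueness of ridge-function representations---applied after grouping mutually proportional $\xi_k$, and likewise the $\eta_k$, so that the rank-one ``directions'' $\xi_k\xi_k^{\mathsf{T}}$ are pairwise distinct---the multiset $\{\xi_k\xi_k^{\mathsf{T}}\}$ must equal $\{\eta_k\eta_k^{\mathsf{T}}\}$, and then comparing the non-affine profiles $\Phi(c\,\cdot)$ with $\Phi(\cdot)$ forces every scaling factor $c$ to equal $1$. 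Hence $\xi_k\xi_k^{\mathsf{T}}=\eta_{\pi(k)}\eta_{\pi(k)}^{\mathsf{T}}$ for a permutation $\pi$, i.e.\ $\{\pm\xi_k\}=\{\pm\eta_k\}$.

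Applying the lemma to the displayed identity, the four (positively weighted) lines $\R e_1,\R e_2,\R e_{1+2},\R e_{1-2}$ coincide with $\R G^{-1}e_1,\R G^{-1}e_2,\R G^{-1}e_{1+2},\R G^{-1}e_{1-2}$; in particular $G^{-1}$ maps this set of four lines to itself. These are the coordinate axes together with the two diagonals, so a linear map permuting them is a scalar multiple of an isometry in the dihedral symmetry group of that arrangement (generated by $R_{\pi/4}$ and a reflection). The only such matrices that are symmetric and positive definite---the standard form of a growth tensor---are the positive multiples of $I$, so $G^{-1}=\lambda I$ with $\lambda>0$, i.e.\ $G=\lambda^{-1}I$ is a multiple of the identity. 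Finally, matching the weights in the lemma, $\|e_v\|/(g_vL_v)=\lambda\|e_v\|/L_v$ for each $v\in\C$, gives $g_1=g_2=g_{1+2}=g_{1-2}=\lambda^{-1}$: every spring recombines by the same factor, i.e.\ the recombination is a dilation, as claimed.
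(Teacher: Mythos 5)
Your Part~\ref{it:energy-deformation} is correct and is exactly the paper's construction: the same partition $\{e_1,e_2\}\cup\{e_{1+2},e_{1-2}\}$, the same $W_1,W_2$, and the same $G_1=\mathrm{diag}(g_1,g_2)$ and $G_2=R_{\frac{\pi}{4}}\mathrm{diag}(g_{1+2},g_{1-2})R_{\frac{\pi}{4}}^T$.

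For Part~\ref{it:F!=AG} you take a genuinely different and much heavier route, and it has a gap at the decisive step. The paper's proof is elementary: it first reduces $W_g(F)=W_i(FG^{-1})$ to the term-by-term identities $\|FG^{-1}e_\alpha\|=g_\alpha^{-1}\|Fe_\alpha\|$ by invoking the fact that $G$ is \emph{independent of the rest-lengths} $L_\alpha$, each of which is arbitrary and appears in exactly one summand; it then evaluates these identities at $F=I$, $F=G$ and $F=\mathrm{diag}(g_1,g_2)$ to obtain three scalar relations that force $g_1=g_2=g_{1+2}=g_{1-2}$. Your argument never uses this independence. Your rigidity lemma (granting it) only yields that $G^{-1}$ permutes the four lines $\R e_1,\R e_2,\R e_{1\pm2}$, hence is a scalar multiple of an element of the order-$16$ dihedral group of that arrangement, and you eliminate the non-scalar elements by declaring $G$ ``symmetric and positive definite --- the standard form of a growth tensor''. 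That hypothesis appears nowhere in the statement, and it is doing real work: without it (or without the paper's $L$-independence) the non-scalar possibilities genuinely occur and the conclusion fails. Concretely, take $L_1=1$, $L_2=2$, $L_{1+2}=3$, $L_{1-2}=4$ and $G^{-1}=R_{\frac{\pi}{4}}$, which sends $e_1\mapsto\tfrac{1}{\sqrt2}e_{1+2}$, $e_2\mapsto-\tfrac{1}{\sqrt2}e_{1-2}$, $e_{1+2}\mapsto\sqrt2\,e_2$, $e_{1-2}\mapsto\sqrt2\,e_1$. Choosing $g_1=2\sqrt2$, $g_2=\tfrac{3}{2\sqrt2}$, $g_{1+2}=\tfrac{\sqrt2}{3}$, $g_{1-2}=\tfrac{\sqrt2}{2}$ makes the two sums in~\eqref{eq:local1b} and $W_i(FG^{-1})$ agree summand-by-summand, so $W_g(F)=W_i(FG^{-1})$ for every $F$ and every spring energy $W$, yet the lattice has manifestly not dilated and no scalar multiple of $I$ works. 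So you must either import the paper's hypothesis that $G$ does not depend on the $L_\alpha$ (which immediately forces the trivial permutation of lines and, incidentally, renders your rigidity lemma unnecessary), or justify the positive-definiteness restriction as part of the theorem; as written the argument does not close.

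A secondary point: the rigidity lemma itself is only sketched. ``Uniqueness of ridge-function representations'' is really the statement that unmatched directions force the profiles to be polynomials of bounded degree, so you still owe the verification that $\Phi(t)=W(\sqrt t)$ cannot be such a polynomial under the standing assumptions ($W$ convex, minimised at $1$, not constant). This can be done, but it is not mere bookkeeping, and you flag it yourself as unproven.
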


Clearly $G_1$ and $G_2$ describe the growth of the system at the continuum level and Item~\eqref{it:energy-deformation} introduces an \emph{energy-deformation decomposition} that relates the initial and grown systems. Item~\eqref{it:F!=AG} on the other hand shows that this is generically optimal and that a \emph{multipicative-decomposition of the deformation gradient} can be expected to hold only when the growth is isotropic, see Example~\ref{eg:Z_2} below. In the sequel when we refer to ``energy-deformation decomposition'' we mean a result of the form~\eqref{eq:energy-deformation} and when we refer to ``multiplicative decomposition'' we mean a result of the form~\eqref{eq:F!=AG}.

Before we present the formal proof we first explain the central insight (Remark~\ref{rem:Z^2_2-1}), present explicit expressions for $W_1$, $W_2$, $G_1$, $G_2$ (Lemma~\ref{lem:!unique_decomposition}) and correct what appears to be a popular misunderstanding in Remark~\ref{rem:Z^2_2-2}.

\begin{remark}[Insight behind the proof] \label{rem:Z^2_2-1}
Since the springs lie along $e_\alpha$, $\alpha \in \{ 1,2,1\pm2 \}$ (in the reference configuration), the elastic energy density of the lattice depends on a deformation $F$ (only) through the quantities $\| F e_\alpha \|$, $\alpha \in \{ 1,2,1\pm2 \}$. Had these vectors formed a basis for $\R^2$ the growth of these springs could have been represented by a linear operator on $\R^2$. But this generically not the case and thus, in general, no $G \in \R^{2 \times 2}$ can \emph{geometrically} represent the growth of the system. However from the four vectors we can form two sets of basis and thus the growth of the springs can be represented by two linear operators on $\R^2$, namely, $G_1, G_2 \in \R^{2 \times 2}$. Item \eqref{it:energy-deformation} then follows from an elementary calculation.

Suppose, however, that the growth $g_\alpha$, $\alpha \in \{1,2,1\pm2\}$ of the springs can in fact be represented by a linear operator $G \in \R^{2 \times 2}$, that is, suppose that there exists a $G \in \R^{2 \times 2}$ such that $g_\alpha = G e_\alpha$, $\alpha \in \{1,2,1\pm2\}$. Another elementary calculation shows that, nevertheless,~\eqref{eq:F!=AG} does not hold except when $G \parallel I$. The reason for this is that the energy density of a lattice with springs along $G e_\alpha$, $\alpha \in \{ 1,2,1\pm2 \}$ is not related to the energy density of a lattice with springs along $e_\alpha$, $\alpha \in \{ 1,2,1\pm2 \}$ through a linear change of variables except when $G \parallel I$.

Thus analogies to crystal plasticity (where the deformation gradient is decomposed into a plastic-part and an elastic-part) are misleading since crystal plasticity does not change the (global) metric on the lattice but growth does.
\end{remark}

The next Lemma shows that $W_1$ and $W_2$ can be picked to be purely elastic (i.e., independent of growth); and conversely $G_1$ and $G_2$ can be picked to be purely geometric (i.e., independent of elasticity):

\begin{lemma}[Non-uniqueness of decomposition] \label{lem:!unique_decomposition}
One choice of $(W_1, G_1)$ and $(W_2, G_2)$ is
\begin{subequations} \label{eq:energy-deformation1}
\begin{align}
W_1(F)
 &= W \left( \frac{\|F e_1 \|}{L_1} \right) + W \left( \frac{\|F e_2 \|}{L_2} \right),
 &G_1
 &= \begin{pmatrix} g_1 & 0 \\ 0 & g_2 \end{pmatrix},  \label{eq:energy-deformation1a} \\
W_2(F)
 &= W \left( \frac{\|F e_{1+2} \|}{L_{1+2}} \right) + W \left( \frac{\|F e_{1-2} \|}{L_{1-2}} \right),
 &G_2
 &= R_{\frac{\pi}{4}} \begin{pmatrix} g_{1+2} & 0 \\ 0 & g_{1-2} \end{pmatrix} R_{\frac{\pi}{4}}^T \notag \\
 &&&= \frac{1}{2} \begin{pmatrix} g_{1+2} + g_{1-2} & g_{1+2} - g_{1-2} \\ g_{1+2} - g_{1-2} & g_{1+2} + g_{1-2} \end{pmatrix}, \label{eq:energy-deformation1b}
\end{align}
\end{subequations}
where $W$ is the energy density of the springs.

Two other genuinely different (i.e., not obtainable by addition of constants) choices exist:
\begin{subequations} \label{eq:energy-deformation2}
\begin{align}
W_1(F)
 &= W \left( \frac{\|F e_1 \|}{L_1} \right) + W \left( \frac{\|F e_{1+2} \|}{L_{1+2}} \right),
 &G_1
 &= \begin{pmatrix} g_1 & g_{1+2} - g_1 \\ 0 & g_{1+2} \end{pmatrix}, \\
W_2(F)
 &=  \left( \frac{\|F e_2 \|}{L_2} \right) + W \left( \frac{\|F e_{1-2} \|}{L_{1-2}} \right),
 &G_2
 &= \begin{pmatrix} g_{1-2} & 0 \\ g_2 - g_{1-2} & g_2 \end{pmatrix}.
\end{align}
\end{subequations}
and
\begin{subequations} \label{eq:energy-deformation3}
\begin{align}
W_1(F)
 &= W \left( \frac{\|F e_1 \|}{L_1} \right) + W \left( \frac{\|F e_{1-2} \|}{L_{1-2}} \right),
 &G_1
 &= \begin{pmatrix} g_1 & g_1 - g_{1-2} \\ 0 & g_{1-2} \end{pmatrix}, \\
W_2(F)
 &= W \left( \frac{\|F e_2 \|}{L_2} \right) + W \left( \frac{\|F e_{1+2} \|}{L_{1+2}} \right),
 &G_2
 &= \begin{pmatrix} g_{1+2} & 0 \\ g_{1+2} - g_2 & g_2 \end{pmatrix}.
\end{align}
\end{subequations}
\end{lemma}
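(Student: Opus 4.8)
The plan is to verify Lemma~\ref{lem:!unique_decomposition} by direct substitution into the two identities \eqref{eq:energy-deformation-initial} and \eqref{eq:energy-deformation-grown} of Theorem~\ref{thm:Z^2_2}, using the explicit Cauchy--Born formula \eqref{eq:CB-energy}. For the initial lattice, $W_i(F) = \sum_{\alpha \in \{1,2,1+2,1-2\}} W(\|Fe_\alpha\|/L_\alpha)$, and each of the three proposed splittings simply partitions the four summands into two pairs, so $W_i = W_1 + W_2$ is immediate for all three cases. For the grown lattice the spring along $e_\alpha$ has rest-length $g_\alpha L_\alpha$, so $W_g(F) = \sum_\alpha W(\|Fe_\alpha\|/(g_\alpha L_\alpha))$; the work is then to check, for each $\alpha$, that $\|F G_k^{-1} e_\alpha\| = \|Fe_\alpha\|/g_\alpha$ whenever $e_\alpha$ belongs to the pair assigned to $W_k$ (equivalently, that $G_k e_\alpha = g_\alpha e_\alpha$), so that $W(\|F G_k^{-1} e_\alpha\|/L_\alpha) = W(\|Fe_\alpha\|/(g_\alpha L_\alpha))$ term by term.

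The core computation is therefore the claim $G_k e_\alpha = g_\alpha e_\alpha$ for the two relevant $\alpha$. For the first (diagonal) choice \eqref{eq:energy-deformation1a}--\eqref{eq:energy-deformation1b}: $G_1 = \mathrm{diag}(g_1,g_2)$ manifestly sends $e_1 \mapsto g_1 e_1$ and $e_2 \mapsto g_2 e_2$; and $G_2 = R_{\pi/4}\,\mathrm{diag}(g_{1+2},g_{1-2})\,R_{\pi/4}^T$, since $R_{\pi/4}^T$ rotates $e_{1+2} = (1,1)$ onto (a multiple of) the first eigenvector and $e_{1-2} = (1,-1)$ onto the second, acts as the scalar $g_{1+2}$ on $\mathrm{span}\,e_{1+2}$ and $g_{1-2}$ on $\mathrm{span}\,e_{1-2}$, giving $G_2 e_{1+2} = g_{1+2} e_{1+2}$ and $G_2 e_{1-2} = g_{1-2} e_{1-2}$; the displayed $2\times 2$ matrix for $G_2$ is then just the expansion of the conjugation. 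For the second choice, one checks $\begin{pmatrix} g_1 & g_{1+2}-g_1 \\ 0 & g_{1+2}\end{pmatrix}e_1 = g_1 e_1$ and the same matrix sends $e_{1+2} = e_1 + e_2 \mapsto (g_1 + g_{1+2} - g_1,\, g_{1+2}) = g_{1+2} e_{1+2}$, and similarly $\begin{pmatrix} g_{1-2} & 0 \\ g_2 - g_{1-2} & g_2\end{pmatrix}$ fixes the lines through $e_2$ (scaling by $g_2$) and through $e_{1-2} = e_1 - e_2$ (scaling by $g_{1-2}$); the third choice is verified identically with $e_{1-2}$ and $e_{1+2}$ interchanged. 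Each of these is a one-line matrix--vector multiplication.

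To establish that the three decompositions are \emph{genuinely different}, i.e.\ not related by adding constants to $W_1$ and subtracting them from $W_2$, I would argue at a single generic deformation: the function $F \mapsto W(\|Fe_1\|/L_1) + W(\|Fe_2\|/L_2)$ depends nontrivially on the directions $e_1, e_2$ (for instance its Hessian structure at $F = I$, via assumption \eqref{eq:assumptions} giving $W''(1) > 0$, singles out those two directions), whereas $W(\|Fe_1\|/L_1) + W(\|Fe_{1+2}\|/L_{1+2})$ singles out $e_1$ and $e_{1+2}$; these cannot differ by a constant because their difference is a nonconstant function of $F$ (e.g.\ it is $W(\|Fe_2\|/L_2) - W(\|Fe_{1+2}\|/L_{1+2})$, which vanishes at $F=I$ but not identically, provided $W$ is nonconstant). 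A clean way to phrase this is: evaluate the candidate $W_1$'s along a path $F_t$ that stretches only in the $e_2$ direction; the first decomposition's $W_1$ stays at its minimum while the second's strictly increases, so they are not constant shifts of one another.

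The main obstacle is essentially bookkeeping rather than depth: one must be careful that $e_{1\pm2}$ are the \emph{unnormalised} vectors $(1,\pm1)$, so that the scaling $G_k e_\alpha = g_\alpha e_\alpha$ is exactly what makes $\|FG_k^{-1}e_\alpha\| = \|Fe_\alpha\|/g_\alpha$, and that the rest-length in the grown lattice is $g_\alpha L_\alpha$ (this is the recombination rule with $p=0$ from Section~\ref{sec:recombination}, cf.\ \eqref{eq:recombination-energy}), so the two normalisations combine correctly. There is a minor subtlety in that $G_1$ in \eqref{eq:energy-deformation2} and \eqref{eq:energy-deformation3} is not symmetric and has eigenvectors $e_1, e_{1\pm2}$ which are not orthogonal, so one cannot invoke an orthogonal diagonalisation as for \eqref{eq:energy-deformation1b}; instead one simply checks the two eigenvector equations directly, which is what I outlined above. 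No step here is expected to be genuinely hard.
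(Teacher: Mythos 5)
Your proposal is correct and follows essentially the same route as the paper: both reduce the claim to the term-by-term identity $\|F G_k^{-1} e_\alpha\| = g_\alpha^{-1}\|F e_\alpha\|$ via the eigenvector relations $G_k e_\alpha = g_\alpha e_\alpha$ (the paper writes this equivalently as $G_k^{-1} e_\alpha = g_\alpha^{-1} e_\alpha$ after displaying the inverses explicitly). Your additional argument for the three choices being genuinely different is a small bonus, as the paper asserts this without proof.
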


\begin{remark} \label{rem:shear_vanishing}
Each of the six energies introduced in Lemma~\ref{lem:!unique_decomposition}, in~\eqref{eq:energy-deformation1},\eqref{eq:energy-deformation2} and~\eqref{eq:energy-deformation3} vanishes on a one-parameter family of shears: For~\eqref{eq:energy-deformation1}, $W_1$ and $W_2$ vanish on
\begin{equation*}
SO(2) \begin{pmatrix} 1 & \cos \theta \\ 0 & \sin \theta \end{pmatrix}, \theta \in \R
 \qquad \text{and} \qquad
  SO(2) \begin{pmatrix} \frac{1 - \cos \theta}{\sqrt{2}} & \frac{1 + \cos \theta}{\sqrt{2}} \\ -\frac{\sin \theta}{\sqrt{2}} & \frac{\sin \theta}{\sqrt{2}} \end{pmatrix}, \theta \in \R
\end{equation*}
respectively; for~\eqref{eq:energy-deformation2}, $W_1$ and $W_2$ vanish on
\begin{equation*}
SO(2) \begin{pmatrix} 1 & \sqrt{2} \cos \theta - 1 \\ 0 & \sqrt{2} \sin \theta \end{pmatrix}, \theta \in \R
 \qquad \text{and} \qquad
  SO(2) \begin{pmatrix} \sqrt{2} \cos \theta + 1 & 1 \\ \sqrt{2} \sin \theta & 0 \end{pmatrix}, \theta \in \R
\end{equation*}
respectively; and for~\eqref{eq:energy-deformation3}, $W_1$ and $W_2$ vanish on
\begin{equation*}
SO(2) \begin{pmatrix} 1 & \sqrt{2} \cos \theta + 1 \\ 0 & \sqrt{2} \sin \theta \end{pmatrix}, \theta \in \R
 \qquad \text{and} \qquad
  SO(2) \begin{pmatrix} \sqrt{2} \cos \theta - 1 & 1 \\ \sqrt{2} \sin \theta & 0 \end{pmatrix}, \theta \in \R
\end{equation*}
respectively.
\end{remark}

\begin{remark}[Residual-stress and shear-resistance] \label{rem:Z^2_2-2}
Note that Theorem~\ref{thm:Z^2_2} holds regardless of whether the system, either initially or when grown, is residually stressed or not. In particular~\eqref{eq:F!=AG} holds for dialatory growth even in the presence of residual stress and does not hold for non-dialatory growth even in the absence of residual-stress. This shows that, contrary to what appears to be popular opinion, the essential issue is not the build-up of residual stress under growth. Rather it is the interplay between shear-resistance and growth. Indeed the multiplicative-decomposition is valid for a homogeneously-growing fluid as Lemma~\ref{lem:Z^n_1} below illustrates. (See also Example~\ref{eg:Z_2} and~\S\ref{sec:Z^2_2-2} below.)
\end{remark}

We are now ready to prove Theorem~\ref{thm:Z^2_2} and Lemma~\ref{lem:!unique_decomposition}:

\begin{proof}
The Cauchy-Born continuum energy densities are
\begin{subequations}
\begin{align}
W_i(F)
 &= \sum_{ \alpha \in \{1,2,1+2,1-2\} } W \left( \frac{\| F e_\alpha \|}{L_\alpha} \right), \label{eq:local1a} \\
W_g(F)
 &= \sum_{ \alpha \in \{1,2,1+2,1-2\} } W \left( \frac{\| F e_\alpha \|}{L_\alpha} g_\alpha^{-1} \right). \label{eq:local1b}
\end{align}
\end{subequations}
Now let $W_1$, $W_2$, $G_1$ and $G_2$ be as in~\eqref{eq:energy-deformation1}. From~\eqref{eq:local1a} it is immediate that~\eqref{eq:energy-deformation-initial} is satisfied. Thus, to prove Item~\eqref{it:energy-deformation} it suffices to verify~\eqref{eq:energy-deformation-grown}: For $F \in \R^{2 \times 2}$:

Since $G_1^{-1} e_\alpha = g_\alpha^{-1} e_\alpha$, $\alpha \in \{1,2\}$,
\begin{subequations} \label{eq:local2}
\begin{align}
W_1(F G_1^{-1})
 &= W \left( \frac{\| F G_1^{-1} e_1 \|}{L_1} \right) + W \left( \frac{\| F G_1^{-1} e_2 \|}{L_2} \right) \notag \\
 &= W \left( \frac{\| F e_1 \|}{L_1} g_1^{-1} \right) + W \left( \frac{\| F e_2 \|}{L_2} g_2^{-1} \right).
\end{align}
Similarly, since $G_2^{-1} e_\alpha = g_\alpha^{-1} e_\alpha$, $\alpha \in \{ 1\pm 2 \}$,
\begin{align}
W_2(F G_2^{-1})
 &= W \left( \frac{\| F G_2^{-1} e_{1+2} \|}{L_{1+2}} \right) + W \left( \frac{\| F G_2^{-1} e_{1-2} \|}{L_{1-2}} \right) \notag \\
 &= W \left( \frac{\| F e_{1+2} \|}{L_{1+2}} g_{1+2}^{-1} \right) + W \left( \frac{\| F e_{1-2} \|}{L_{1-2}} g_{1-2}^{-1} \right).
\end{align}
\end{subequations}
Equations~\eqref{eq:local1b} and~\eqref{eq:local2} immediately yield~\eqref{eq:energy-deformation-grown}. 

Similarly, when $W_1$, $W_2$, $G_1$ and $G_2$ are as in~\eqref{eq:energy-deformation2}, \eqref{eq:energy-deformation-initial} is immediate and to verify~\eqref{eq:energy-deformation-grown} it suffices to verify that
\begin{alignat*}{2}
G_1^{-1} e_\alpha
 &= g_\alpha^{-1} e_\alpha, \quad &&\alpha \in \{1,1+2\}, \\
G_2^{-1} e_\alpha
 &= g_\alpha^{-1} e_\alpha, \quad &&\alpha \in \{2,1-2\}.
\end{alignat*}
This in turn immediately follows from
\begin{align*}
G_1^{-1}
 &= \begin{pmatrix} g_1^{-1} & g_{1+2}^{-1} - g_1^{-1} \\ 0 & g_{1+2}^{-1} \end{pmatrix}, \\
G_2^{-1} 
 &= \begin{pmatrix} g_{1-2}^{-1} & 0 \\ g_2^{-1} - g_{1-2}^{-1} & g_2^{-1} \end{pmatrix}.
\end{align*}

Finally, when $W_1$, $W_2$, $G_1$ and $G_2$ are as in~\eqref{eq:energy-deformation3}, \eqref{eq:energy-deformation-initial} is again immediate and to verify~\eqref{eq:energy-deformation-grown} its suffices to verify that
\begin{alignat*}{2}
G_1^{-1} e_\alpha
 &= g_\alpha^{-1} e_\alpha, \quad &&\alpha \in \{1,1-2\}, \\
G_2^{-1} e_\alpha
 &= g_\alpha^{-1} e_\alpha, \quad &&\alpha \in \{2,1+2\}.
\end{alignat*}
This in turn immediately follows from
\begin{align*}
G_1^{-1}
 &= \begin{pmatrix} g_1^{-1} & g_1^{-1} - g_{1-2}^{-1} \\ 0 & g_{1-2}^{-1} \end{pmatrix}, \\
G_2^{-1} 
 &= \begin{pmatrix} g_{1+2}^{-1} & 0 \\ g_{1+2}^{-1} - g_2^{-1} & g_2^{-1} \end{pmatrix}.
\end{align*}

We now turn to Item~\eqref{it:F!=AG}. Suppose there exists $G \in \R^{2 \times 2}$ such that~\eqref{eq:F!=AG} holds $\forall F \in \R^{2 \times 2}$. That is, $\forall F \in \R^{2 \times 2}$,
\begin{equation*}
W_g(F)
 = \sum_{ \alpha \in \{1,2,1+2,1-2\} } W \left( \frac{\| F e_\alpha \|}{L_\alpha} g_\alpha^{-1} \right)
 = \sum_{ \alpha \in \{1,2,1+2,1-2\} } W \left( \frac{\| F G^{-1} e_\alpha \|}{L_\alpha} \right)
 = W_i(F G^{-1}).
\end{equation*}
Recall that $G$ is independent of $L_1$, $L_2$, $L_{1+2}$ and $L_{1-2}$. Since each $L_\alpha$, $\alpha \in \{ 1, 2, 1 \pm 2 \}$ is arbitrary (though fixed) and appears only in one term it follows that the equality holds term-by-term: $\forall F \in \R^{2 \times 2}$,
\begin{equation} \label{eq:local3}
\| F G^{-1} e_\alpha \| = \| F e_\alpha \| g_\alpha^{-1}, \qquad \alpha \in \{ 1, 2, 1 \pm 2 \}
\end{equation}
With the choice $F=I$, setting $H=F G^{-1}$, this implies that
\begin{alignat*}{3}
\frac{2}{g_{1+2}^2}
 &= \| H (e_1 + e_2) \|^2
 &&= \| H e_1 \|^2 + 2 \langle H e_1, H e_2 \rangle + \| H e_2 \|^2
 &&= \frac{1}{g_1^2} + 2 \langle H e_1, H e_2 \rangle + \frac{1}{g_2^2}, \\
\frac{2}{g_{1-2}^2}
 &= \| H (e_1 - e_2) \|^2
 &&= \| H e_1 \|^2 - 2 \langle H e_1, H e_2 \rangle + \| H e_2 \|^2
 &&= \frac{1}{g_1^2} - 2 \langle H e_1, H e_2 \rangle + \frac{1}{g_2^2}.
\end{alignat*}
Thus,
\begin{subequations} \label{eq:local4}
\begin{equation}
\frac{1}{g_1^2} + \frac{1}{g_2^2}
 = \frac{1}{g_{1+2}^2} + \frac{1}{g_{1-2}^2}.
 \end{equation}
With the choice $F=G$, \eqref{eq:local3} implies that
\begin{alignat*}{3}
2 g_{1+2}^2
 &= \| G (e_1 + e_2) \|^2
 &&= \| G e_1 \|^2 + 2 \langle G e_1, G e_2 \rangle + \| G e_2 \|^2
 &&= g_1^2 + 2 \langle G e_1, G e_2 \rangle + g_2^2, \\
2 g_{1-2}^2
 &= \| G (e_1 - e_2) \|^2
 &&= \| G e_1 \|^2 - 2 \langle G e_1, G e_2 \rangle + \| G e_2 \|^2
 &&= g_1^2 - 2 \langle G e_1, G e_2 \rangle + g_2^2.
\end{alignat*}
Thus,
\begin{equation}
g_1^2 + g_2^2
 = g_{1+2}^2 + g_{1-2}^2.
 \end{equation}
With the choice $F= \left( \begin{smallmatrix} g_1 & 0 \\ 0 & g_2 \end{smallmatrix} \right)$, setting $H=F G^{-1}$, \eqref{eq:local3} implies,
\begin{alignat*}{3}
\frac{g_1^2 + g_2^2}{g_{1+2}^2}
 &= \| H (e_1 + e_2 ) \|^2
 &&= \| H e_1 \|^2 + 2 \langle H e_1, H e_2 \rangle + \| H e_2 \|^2
 &&= 2 + 2 \langle H e_1, H e_2 \rangle, \\
\frac{g_1^2 + g_2^2}{g_{1-2}^2}
 &= \| H (e_1 - e_2 ) \|^2
 &&= \| H e_1 \|^2 - 2 \langle H e_1, H e_2 \rangle + \| H e_2 \|^2
 &&= 2 - 2 \langle H e_1, H e_2 \rangle.
\end{alignat*}
Thus,
\begin{equation}
\left( g_1^2 + g_2^2 \right) \left( \frac{1}{g_{1+2}^2} + \frac{1}{g_{1-2}^2} \right) = 4.
\end{equation}
\end{subequations}
An easy calculation shows that~\eqref{eq:local4} implies that $g_1 = g_2 = g_{1+2} = g_{1-2} =: g$ (say). In other words, the lattice has dilated by $g$ and, from~\eqref{eq:local3} we can pick $G = g I$.
\end{proof}

\begin{remark} \label{rem:Z^2_2-3}
The proof shows why decomposition into two energies suffices: As Remark~\ref{rem:shear_vanishing} shows each of $W_1$ and $W_2$ in~\eqref{eq:energy-deformation} vanishes on a shear. This sufficiently decouples the energy that each ``growth mode'' does not influence the energy of the other ``deformation mode'', i.e., the deformation that the other energy function can see. Thus $G_1$ does not influence $W_2$ and vice versa.

On the other hand that $W_1(\cdot G_1^{-1})$ and $W_2(\cdot G_2^{-1})$ are evaluated at the same $F$ couples these energies together. Indeed the theorem proves that this provides all the coupling that is needed.
\end{remark}

Remarks~\ref{rem:Z^2_2-1}, \ref{rem:Z^2_2-2} and~\ref{rem:Z^2_2-3} can be illustrated by a one-dimensional example:

\begin{example} \label{eg:Z_2}
Here we present a (geometrically) one-dimensional example where the multiplicative-decomposition fails but an energy-deformation decomposition continues to hold. Consider a one-dimensional discrete system where both the nearest neighbours and the next-nearest neighbours are connected by springs. This is the initial system.

Let the nearest-neighbour springs grow by $g_1$ and the next-nearest neighbour-springs grow by $g_2$. This is the grown system.

Then, as is easy to see, the continuum energy densities are related by
\begin{align*}
W_i(\cdot)
 &= \frac{1}{2} W_1(\cdot) + \frac{1}{2} W_2(\cdot), \\ 
W_g(\cdot)
 &= \frac{1}{2} W_1(\cdot g_1^{-1}) + \frac{1}{2} W_2(\cdot g_2^{-1})
\end{align*}
where $W_1$ and $W_2$ are the energies of the nearest-neighbour and next-nearest neighbour springs, respectively.

But, when $g_1 \neq g_2$, there is no $g$ such that $W_g(\cdot) = W_i(\cdot g^{-1})$. This is, of course, due to the interaction between the nearest-neighbour springs and the next-nearest-neighbour springs. In dimensions larger than one, when shear resistance is present, such interaction is generic; indeed such interaction is the source of shear resistance.

On the other hand, $g_1 = g_2$, is analogous to isotropic growth.
\end{example}

\subsubsection{Comparison of the two decompositions for the lattice $(\Z^2,\{e_1,e_2,e_{1\pm2} \})$}
\label{sec:Z^2_2-2}

To illustrate the difference between the energy-deformation decomposition~\eqref{eq:energy-deformation} and the multiplicative decomposition~\eqref{eq:F!=AG} we consider the lattice $(\Z^2,\{e_1,e_2,e_{1\pm2} \})$ with initial rest-lengths of the springs $L_1 = L_2 = 1$ and $L_{1+2} = L_{1-2} = \sqrt{2}$. Note that this is a residually unstressed configuration.

In Examples~\ref{eg:non-diagonal_shrinking} to~\ref{eg:residually-stressed_sheared_lattice} below we compute the fractional error
\begin{equation} \label{eq:local9}
\frac{W_i(F G^{-1})}{ W_g (F)} -1
\end{equation}
that results when the energy of the recombined system is computed using the multiplicative decomposition for 
\begin{equation} \label{eq:local5}
F \in \left\{ \begin{pmatrix} \l_1 & \l_3 \\ 0 & \l_2 \end{pmatrix} \middle| \ \l_1, \l_2 \in [0.8, 1.25], \l_3 \in [-0.5,0.5] \right\}.
\end{equation}
To do this, we should first associate a (single) $G \in \R^{2 \times 2}$ with the grown system. It is natural to identify $G$ with the deformation gradient corresponding to the ground state of the grown system. This specifies $G$ unto pre-multiplication by a rotation in $SO(2)$. In keeping with the form of $F$ in~\eqref{eq:local5} we pick $G$ to be of the form
\begin{equation*}
G = \begin{pmatrix} G_{11} & G_{12} \\ 0 & G_{22} \end{pmatrix}
\end{equation*}
and require $G_{11}, G_{22} > 0$.

\begin{example}[Recombination that shrinks only the non-diagonal springs by 10\%] \label{eg:non-diagonal_shrinking} 
Let the recombination in the system be $g_1 = g_2 = 1$, $g_{1+2} = g_{1-2} = 0.9$. An elementary calculation shows that $G= 0.94475 I$. (Note however that the growth is not isotropic.) The region where the fractional error~\eqref{eq:local9} is greater than 10\% is shown in Figure~\ref{fig:non-diagonal_shrinking}.
\end{example}

\begin{figure}
\begin{center}
\begin{subfigure}{0.45\textwidth}
\includegraphics[width=\textwidth]{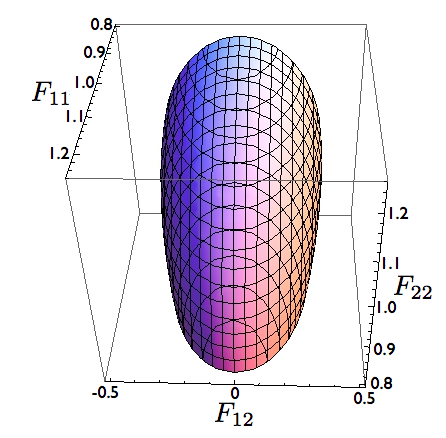}
\caption{}
\label{fig:non-diagonal_shrinking}
\end{subfigure}
\begin{subfigure}{0.45\textwidth}
\includegraphics[width=\textwidth]{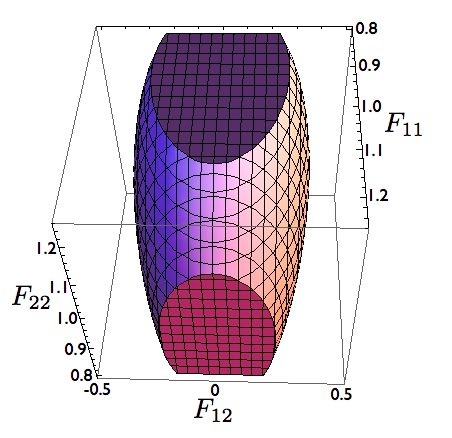}
\caption{}
\label{fig:diagonal_elongation}
\end{subfigure}
\begin{subfigure}{0.45\textwidth}
\includegraphics[width=\textwidth]{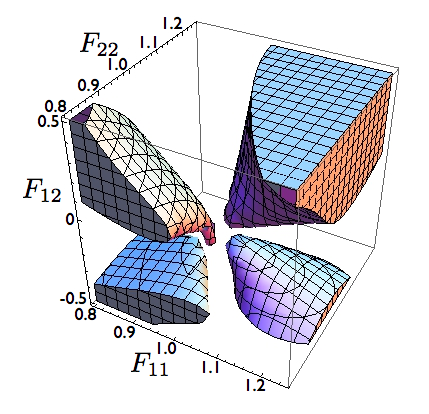}
\caption{}
\label{fig:residually-unstressed_sheared_lattice}
\end{subfigure}
\begin{subfigure}{0.45\textwidth}
\includegraphics[width=\textwidth]{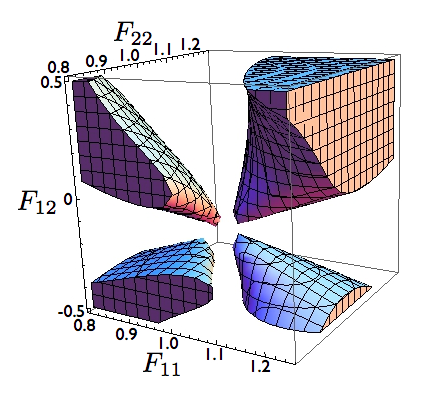}
\caption{}
\label{fig:residually-stressed_sheared_lattice}
\end{subfigure}
\caption{Regions where the fractional error in energy is greater than 10\% for (\subref{fig:non-diagonal_shrinking}) Example~\ref{eg:non-diagonal_shrinking}, (\subref{fig:diagonal_elongation}) Example~\ref{eg:diagonal_elongation}, (\subref{fig:residually-unstressed_sheared_lattice}) Example~\ref{eg:residually-unstressed_sheared_lattice} and (\subref{fig:residually-stressed_sheared_lattice}) Example~\ref{eg:residually-stressed_sheared_lattice}. Note that (\subref{fig:residually-unstressed_sheared_lattice}) is very similar to (\subref{fig:residually-stressed_sheared_lattice}), showing the unimportance of residual stress; on the other hand these regions are very different from (\subref{fig:non-diagonal_shrinking}) and (\subref{fig:residually-unstressed_sheared_lattice}) showing the importance of shear caused by growth. This is in keeping with Remark~\ref{rem:Z^2_2-2}.}
\end{center}
\end{figure}

\begin{example}[Recombination that elongates only the diagonal springs by 10\%] \label{eg:diagonal_elongation} 
Let the recombination in the system be $g_1 = g_2 = 1$, $g_{1+2} = g_{1-2} = 1.1$. An elementary calculation shows that $G= 1.04525 I$. (Note however that the growth is not isotropic.) The regions where the fractional error~\eqref{eq:local9} is greater than 10\% is shown in Figure~\ref{fig:diagonal_elongation}.
\end{example}

\begin{example}[Recombination of about 10\% resulting in a residually-unstressed sheared lattice] \label{eg:residually-unstressed_sheared_lattice} 
Let the recombination in the system be $g_1 = g_2 = 1$, $g_{1+2} = 0.9$ and $g_{1-2} = \sqrt{2 - 0.9^2} = 1.09087$. Note that 
\begin{equation*}
\left( \sqrt{2} g_{1+2} \right)^2 + \left( \sqrt{2} g_{1-2} \right)^2
 = 2 (g_1^2 + g_2^2),
\end{equation*}
so the recombined system is also residually unstressed. An elementary calculation shows that
\begin{equation*}
G = \begin{pmatrix} 1 & -0.19 \\ 0 & 0.98178 \end{pmatrix}.
\end{equation*}
The region where the fractional error~\eqref{eq:local9} is greater than 10\% is shown in Figure~\ref{fig:residually-unstressed_sheared_lattice}.
\end{example}

\begin{example}[Recombination resulting in a residually-stressed sheared lattice] \label{eg:residually-stressed_sheared_lattice} 
Let the recombination in the system be $g_1 = g_2 = 1$, $g_{1+2} = 0.9$ and $g_{1-2} = 1.1$. Note that 
\begin{equation*}
\left( \sqrt{2} g_{1+2} \right)^2 + \left( \sqrt{2} g_{1-2} \right)^2
 \neq 2 (g_1^2 + g_2^2),
\end{equation*}
so the recombined system is residually stressed. An elementary calculation shows that
\begin{equation*}
G = \begin{pmatrix} 1.00243 & -0.19757 \\ 0 & 0.98277 \end{pmatrix}.
\end{equation*}
The region where the fractional error~\eqref{eq:local9} greater than 10\% and 20\% are shown in Figure~\ref{fig:residually-stressed_sheared_lattice}.
\end{example}

\subsection{Lattices on $\Z^n$.}

Next we turn to generic homogeneous lattices undergoing homogeneous growth. 

Consider the lattice $(\Z^D,\C)$. Let the initial rest-lengths of the springs be $L_v$, $v \in \C$. Let the growth (due to recombination) in the springs be $g_v$, $v \in \C$. The following result relates the continuum limits of the initial and grown systems:

\begin{theorem}[Homogenenously-recombined $(\Z^D,\C)$] \label{thm:Z^D}
Let $(\Z^D,\C)$ be a lattice of order $K$ and let its Cauchy-Born continuum energy density be $W_i$ and $W_g$, before and after homogeneous recombination, respectively. Then there exist $W_1, \dots, W_K \colon \R^{D \times D} \to \R$ (independent of $g_\alpha$, $\alpha \in \C$) and $G_1, \dots, G_K \in \R^{D \times D}$ (independent of $L_\alpha$, $\alpha \in \C$) such that $\forall F \in \R^{D \times D}$,
\begin{subequations} \label{eq:energy-deformation_(generic)}
\begin{align}
W_i(F) &= \sum_{k=1}^K W_k(F), \label{eq:energy-deformation-initial_(generic)} \\
W_g(F) &= \sum_{k=1}^K W_k(F G_k^{-1}). \label{eq:energy-deformation-grown_(generic)}
\end{align}
\end{subequations}
\end{theorem}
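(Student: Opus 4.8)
The plan is to mimic the construction underlying Theorem~\ref{thm:Z^2_2}\eqref{it:energy-deformation} — concretely, the first decomposition of Lemma~\ref{lem:!unique_decomposition}, in which $\C$ is split as $\{e_1,e_2\}\sqcup\{e_{1+2},e_{1-2}\}$ — using the order-$K$ partition of $\C$ as the organising device. Since $(\Z^D,\C)$ satisfies the Cauchy--Born rule, \eqref{eq:CB-energy} gives
\begin{equation*}
W_i(F) = \sum_{v\in\C} W\!\left(\frac{\|Fv\|}{L_v}\right), \qquad
W_g(F) = \sum_{v\in\C} W\!\left(\frac{\|Fv\|}{L_v}\,g_v^{-1}\right),
\end{equation*}
the growth factors entering exactly as in \eqref{eq:local1b} (recombination being $0$-homogeneous). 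Fix a partition $\C_1,\dots,\C_K$ of $\C$ into linearly-independent sets, which exists by the definition of order, and set
\begin{equation*}
W_k(F) := \sum_{v\in\C_k} W\!\left(\frac{\|Fv\|}{L_v}\right).
\end{equation*}
Then \eqref{eq:energy-deformation-initial_(generic)} is immediate because the $\C_k$ are disjoint with union $\C$, and $W_k$ manifestly depends only on the $L_v$'s.

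Next I would construct the $G_k$. Because $\C_k$ is a linearly-independent subset of $\R^D$, I extend it to a basis $\{v:v\in\C_k\}\cup\{w_1,\dots,w_{D-|\C_k|}\}$ of $\R^D$ and let $G_k$ be the linear map that fixes each $w_j$ and scales each $v\in\C_k$ by the factor $g_v>0$. In this basis $G_k$ is ``diagonal'' with strictly positive diagonal entries, hence invertible, and $G_k^{-1}v = g_v^{-1}v$ for every $v\in\C_k$. Note that $G_k$ depends only on the $g_v$'s (and the arbitrary-but-fixed choice of extension vectors), not on the $L_v$'s, as the statement requires.

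Finally I would verify \eqref{eq:energy-deformation-grown_(generic)}: for any $F\in\R^{D\times D}$ and any $v\in\C_k$ we have $\|FG_k^{-1}v\| = \|F(g_v^{-1}v)\| = g_v^{-1}\|Fv\|$, so
\begin{equation*}
W_k(FG_k^{-1}) = \sum_{v\in\C_k} W\!\left(\frac{\|FG_k^{-1}v\|}{L_v}\right) = \sum_{v\in\C_k} W\!\left(\frac{\|Fv\|}{L_v}\,g_v^{-1}\right),
\end{equation*}
and summing over $k=1,\dots,K$ recovers $W_g(F)$ term by term. The argument is essentially routine once the bookkeeping is set up; the one place that genuinely invokes a hypothesis is the construction of the invertible $G_k$, which uses precisely that each block $\C_k$ of the partition is linearly independent — that is, the order-$K$ structure — together with positivity of the recombination factors $g_v$. (Minimality of $K$ plays no role in the existence of the decomposition; it only ensures the number of terms is as small as possible.)
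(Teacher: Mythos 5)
Your proposal is correct and follows essentially the same route as the paper's proof: both partition $\C$ into the linearly-independent blocks $\C_k$ guaranteed by the order-$K$ definition, set $W_k(F)=\sum_{v\in\C_k}W\bigl(\|Fv\|/L_v\bigr)$, and define $G_k$ on a basis extending $\C_k$ so that $G_k^{-1}v=g_v^{-1}v$ on $\C_k$ and $G_k^{-1}$ acts as the identity on the extension vectors. The only cosmetic difference is that the paper specifies $G_k^{-1}$ directly while you define $G_k$ and invert, and you make the invertibility and the (in)dependence claims explicit where the paper leaves them implicit.
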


When $K=1$, from Theorem~\ref{thm:Z^D} the energy-deformation decomposition reduces to a multiplicative decomposition of the deformation:

\begin{corollary}[Multiplicative decomposition of the deformation holds for a growing fluid] \label{lem:Z^n_1}
Let the assumptions of Theorem~\ref{thm:Z^D} hold for a lattice of order 1. Then there exists $G \in \Def$ such that $\forall F \in \Def$,
\begin{equation*}
W_g(F)
 = W_i(F G^{-1}).
\end{equation*}
\end{corollary}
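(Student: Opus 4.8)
The plan is to obtain the Corollary as the degenerate case $K=1$ of Theorem~\ref{thm:Z^D}. A lattice of order~$1$ has, by Definition~\ref{def:lattice-order}, a connectivity $\C$ that is itself a set of linearly independent vectors in $\R^D$; in particular $|\C| \leqslant D$. Applying Theorem~\ref{thm:Z^D} with $K=1$ produces a single pair $(W_1, G_1)$, with $W_1 \colon \Def \to \R$ and $G_1 \in \Def$, satisfying $W_i(F) = W_1(F)$ and $W_g(F) = W_1(F G_1^{-1})$ for all $F \in \Def$. Combining these two identities gives $W_g(F) = W_i(F G_1^{-1})$, so the Corollary holds with $G := G_1$.

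The only point that requires a word of care is that the asserted identity involves $G^{-1}$, so we must know that the matrix furnished by Theorem~\ref{thm:Z^D} is invertible. This is immediate from the construction underlying that theorem: writing $\C = \{v_1, \dots, v_m\}$ with $m \leqslant D$ and extending to a basis $\{v_1, \dots, v_D\}$ of $\R^D$, one may take $G$ to be the linear operator determined by $G v_j = g_{v_j} v_j$ for $j \leqslant m$ and $G v_j = v_j$ for $m < j \leqslant D$; since each $g_{v_j} > 0$ this $G$ is invertible, with $G^{-1} v_j = g_{v_j}^{-1} v_j$ for $j \leqslant m$. Alternatively one can verify the Corollary directly from the Cauchy--Born formula~\eqref{eq:CB-energy}: here $W_i(F) = \sum_{j=1}^m W(\|F v_j\|/\L(v_j))$ and $W_g(F) = \sum_{j=1}^m W(\|F v_j\|\,g_{v_j}^{-1}/\L(v_j))$, and with $G$ as above $\|F G^{-1} v_j\| = g_{v_j}^{-1}\|F v_j\|$, so $W_i(F G^{-1}) = W_g(F)$ term by term.

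There is essentially no obstacle here: the mathematical content is entirely contained in Theorem~\ref{thm:Z^D}, and the Corollary is merely its specialisation to the case in which the connectivity vectors already form a (partial) basis, so that the growth of the springs can be realised by a single global linear change of variables. This is consistent with the Lemma showing that an order-$1$ Cauchy--Born continuum energy density is never shear-resisting, and with Remark~\ref{rem:Z^2_2-2}: an order-$1$ lattice behaves like a fluid, and it is precisely the absence of shear-resistance that permits the multiplicative decomposition~\eqref{eq:F!=AG} to hold.
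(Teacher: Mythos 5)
Your argument is correct and is exactly the route the paper takes: the Corollary is the $K=1$ specialisation of Theorem~\ref{thm:Z^D}, where the single summand forces $W_1 = W_i$ and hence $W_g(F) = W_1(F G_1^{-1}) = W_i(F G_1^{-1})$, with invertibility of $G_1$ guaranteed by its definition on a basis via $G_1^{-1} v = g_v^{-1} v$ (each $g_v > 0$). Your supplementary direct verification from the Cauchy--Born formula is a harmless bonus, not a departure.
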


\begin{proof}[Proof of Theorem~\ref{thm:Z^D}]

Let $\C_k$, $k=1,\dots,K$ be a partition of $\C$ as specified in Definition~\ref{def:lattice-order}. From~\eqref{eq:CB-energy}, the Cauchy-Born continuum energy densities are of the form
\begin{subequations} \label{eq:local6}
\begin{alignat}{2}
W_i(F)
 &= \sum_{ v \in \C } W \left( \frac{\| F v \|}{L_v} \right)
 &&= \sum_{k=1}^K \sum_{ v \in \C_k } W \left( \frac{\| F v \|}{L_v} \right), \label{eq:local6a} \\
W_g(F)
 &= \sum_{ v \in \C } W \left( \frac{\| F v \|}{L_v} g_v^{-1} \right)
 &&= \sum_{k=1}^K \sum_{ v \in \C_k } W \left( \frac{\| F v \|}{L_v} g_v^{-1} \right). \label{eq:local6b}
\end{alignat}
\end{subequations}
Let $\tilde{\C}_k \supseteq \C_k$ be a basis of $\R^D$. We define $G_k$ by
\begin{subequations} \label{eq:local7}
\begin{alignat}{2}
G_k^{-1} v
 &= g_v^{-1} v, \quad && v \in \C_k, \\
G_k^{-1} v
 &= v, \quad && v \in \tilde{\C}_k \setminus \C_k. 
\end{alignat}
\end{subequations} 
and $W_k \colon \Def \to \R$ by
\begin{equation} \label{eq:local8} 
W_k(F) 
 := \sum_{ v \in \C_k } W \left( \frac{\| F v \|}{L_v} \right).
\end{equation}
Theorem~\ref{thm:Z^D} follows immediately from~\eqref{eq:local6}, \eqref{eq:local7} and~\eqref{eq:local8}.

\end{proof}

\begin{remark}
When $D=2$ and $\C = \{e_1,e_2,e_{1\pm2}\}$ we recover Theorem~\ref{thm:Z^2_2}\eqref{it:energy-deformation}. It is clear that $K=2$ and the possible partitions of $\{e_1,e_2,e_{1\pm2}\}$ are
\begin{gather*}
\{ e_1, e_2 \}, \{ e_{1 \pm 2} \}, \\
\{ e_1, e_{1+2} \}, \{ e_2, e_{1-2} \}, \\
\{ e_1, e_{1-2} \}, \{ e_2, e_{1+2} \}
\end{gather*}
and the corresponding decompositions of the energy are given by~\eqref{eq:energy-deformation1},\eqref{eq:energy-deformation2} and~\eqref{eq:energy-deformation3}, respectively. 
\end{remark}

\section{Growth and homogenisation}
\label{sec:homogenisation}

Next we explore how the energy-deformation decomposition interacts with homogenisation by considering an inhomogeneous lattice subjected to inhomogeneous recombination. The natural question is whether Theorem~\ref{thm:Z^D} extends to this situation, that is, are the continuum energies of the initial and recombined lattice related by an energy-deformation decomposition?

Corollary~\ref{cor:Z^D} (of Theorem~\ref{thm:Z^D} below shows that indeed they are. We state it for inhomogeneous lattices with homogeneous representatives (Definition~\ref{def:homogeneous_representative}) because we do not wish to be distracted by the technicalities of homogenisation; a more general result is proved in~\cite{Chenchiah-Shipman3}.

\begin{corollary}[Recombination of inhomogenenous $(\Z^D,\C)$] \label{cor:Z^D}
Let the initial and recombined (inhomogeneous) lattices $(\Z^D,\C)$ lattices have homogeneous representatives whose Cauchy-Born continuum energy-densities are $W_i$ and $W_g$ respectively.  Then there exist $W_1,\dots, W_K \colon \Def \to \R$ and $\overline{G}_1, \dots, \overline{G}_K \in \Def$ such that $\forall F \in \Def$,
\begin{align*}
W_i(F) &= \sum_{k=1}^K W_k(F),\\
W_g(F) &= \sum_{k=1}^K W_k(F \overline{G}_i^{-1}).
\end{align*}
\end{corollary}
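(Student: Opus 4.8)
The plan is to reduce Corollary~\ref{cor:Z^D} to Theorem~\ref{thm:Z^D} by exploiting the homogeneous-representative hypothesis, which by construction collapses all the homogenisation technicalities into a single homogeneous lattice whose Cauchy--Born energy agrees with the inhomogeneous one. First I would name the relevant representatives: let $(\Z^D,\C,\overline{\L}_i)$ be a homogeneous representative of the initial lattice $(\Z^D,\C,\L_i)$ and let $(\Z^D,\C,\overline{\L}_g)$ be a homogeneous representative of the recombined lattice $(\Z^D,\C,\L_g)$. By Definition~\ref{def:homogeneous_representative} we have $W_i = \Energy_{(\C,\overline{\L}_i)}$ and $W_g = \Energy_{(\C,\overline{\L}_g)}$, and both of these are Cauchy--Born energies of the form~\eqref{eq:CB-energy}, namely $W_i(F) = \sum_{v \in \C} W(\|Fv\|/\overline{L}_i(v))$ and $W_g(F) = \sum_{v \in \C} W(\|Fv\|/\overline{L}_g(v))$.

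The second step is to observe that the two homogeneous representatives are related exactly as an initial and a homogeneously-recombined lattice would be: define $\overline{g}_v := \overline{L}_g(v)/\overline{L}_i(v)$ for each $v \in \C$. Then $W_g(F) = \sum_{v \in \C} W\!\left(\frac{\|Fv\|}{\overline{L}_i(v)}\,\overline{g}_v^{-1}\right)$, which is precisely~\eqref{eq:local6b} with rest-lengths $\overline{L}_i(v)$ and growths $\overline{g}_v$. In other words, the homogeneous representative of the recombined lattice is (energetically) the homogeneous recombination, by the factors $\overline{g}_v$, of the homogeneous representative of the initial lattice. This is the conceptual heart: although the underlying lattices were inhomogeneous and the recombination was inhomogeneous, passing to representatives produces a pair to which Theorem~\ref{thm:Z^D} applies verbatim.

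The third step is then a direct invocation of Theorem~\ref{thm:Z^D} applied to the homogeneous lattice $(\Z^D,\C)$ of order $K$ with rest-lengths $\overline{L}_i(v)$ and growths $\overline{g}_v$: this yields $W_1,\dots,W_K \colon \Def \to \R$ (given explicitly by~\eqref{eq:local8} with $L_v$ replaced by $\overline{L}_i(v)$, hence independent of the $\overline{g}_v$, i.e.\ of the recombination) and $\overline{G}_1,\dots,\overline{G}_K \in \Def$ (given by~\eqref{eq:local7} with $g_v$ replaced by $\overline{g}_v$) such that $W_i(F) = \sum_{k=1}^K W_k(F)$ and $W_g(F) = \sum_{k=1}^K W_k(F\,\overline{G}_k^{-1})$ for all $F \in \Def$. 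This is exactly the claimed conclusion, so the proof is complete once the identifications above are recorded.

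I do not expect a genuine obstacle here; the only point requiring a little care is making sure the hypotheses of Theorem~\ref{thm:Z^D} really are met --- specifically that the same connectivity $\C$ (and hence the same order $K$ and the same admissible partitions $\C_1,\dots,\C_K$) serves both representatives, which holds because a homogeneous representative shares the connectivity of the lattice it represents, only the rest-length function changes. One should also note in passing that nothing forces $\overline{g}_v \in (0,1)$ or $\overline{g}_v > 1$; as already remarked after the definition of generic growth, all positive values are allowed, so $\overline{g}_v = \overline{L}_g(v)/\overline{L}_i(v) \in (0,\infty)$ causes no difficulty. The mild imprecision is that the homogeneous representative need not be unique (different $\overline{\L}$ can give the same continuum energy), so $\overline{G}_1,\dots,\overline{G}_K$ are only determined up to that ambiguity --- but the statement merely asserts existence, so this is immaterial.

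\begin{proof}[Proof of Corollary~\ref{cor:Z^D}]
Let $(\Z^D,\C,\overline{\L}_i)$ and $(\Z^D,\C,\overline{\L}_g)$ be homogeneous representatives of the initial and recombined lattices respectively, so that, by Definition~\ref{def:homogeneous_representative} and~\eqref{eq:CB-energy},
\begin{align*}
W_i(F) &= \sum_{v \in \C} W \left( \frac{\| F v \|}{\overline{L}_i(v)} \right), \\
W_g(F) &= \sum_{v \in \C} W \left( \frac{\| F v \|}{\overline{L}_g(v)} \right).
\end{align*}
For $v \in \C$ set $\overline{g}_v := \overline{L}_g(v) / \overline{L}_i(v) \in (0,\infty)$. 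Then
\begin{equation*}
W_g(F) = \sum_{v \in \C} W \left( \frac{\| F v \|}{\overline{L}_i(v)} \, \overline{g}_v^{-1} \right),
\end{equation*}
so $W_i$ and $W_g$ are exactly the Cauchy--Born continuum energy densities of the homogeneous lattice $(\Z^D,\C)$ of order $K$ before and after homogeneous recombination by the factors $\overline{g}_v$, with initial rest-lengths $\overline{L}_i(v)$. Applying Theorem~\ref{thm:Z^D} to this lattice produces $W_1,\dots,W_K \colon \Def \to \R$ (independent of the $\overline{g}_v$) and $\overline{G}_1,\dots,\overline{G}_K \in \Def$ (independent of the $\overline{L}_i(v)$) with
\begin{align*}
W_i(F) &= \sum_{k=1}^K W_k(F), \\
W_g(F) &= \sum_{k=1}^K W_k(F \overline{G}_k^{-1})
\end{align*}
for all $F \in \Def$, which is the assertion.
\end{proof}
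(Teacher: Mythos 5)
Your proof is correct and follows essentially the same route as the paper's: pass to the homogeneous representatives, observe that they are related by an effective homogeneous recombination $\overline{g}_v = \overline{L}_g(v)/\overline{L}_i(v)$, and invoke Theorem~\ref{thm:Z^D}. The paper's own proof is just a terser version of this reduction (it does not even write out the $\overline{g}_v$ explicitly), so there is nothing to add.
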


\begin{proof}
By assumption (see Definition~\ref{def:homogeneous_representative} and~\eqref{eq:CB-energy}) there exist $\L_i, \L_g \colon \C \cup -\C \to (0,\infty)$ such that $W_i$ and $W_g$ are the continuum energy densities of the homogeneous lattices $(\Z^D,\C,\L_i)$ and $(\Z^D,\C,\L_g)$, respectively. Thus we are in the situation of Theorem~\ref{thm:Z^D} and the result follows. 
\end{proof}

In the rest of the section we present computational simulations that illustrate Corollary~\ref{cor:Z^D} for the lattice $(\Z^2_N,\{e_1,e_2,e_{1\pm2} \})$ by demonstrating the existence of $W_1, W_2 \colon \R^{2 \times 2} \to \R$ and $\overline{G}_1,\overline{G}_2 \in \R^{2 \times 2}$ such that $\forall F \in \R^{2 \times 2}$,
\begin{align*}
W_i(F) &= W_1(F) + W_2(F), \\
W_g(F) &= W_1(F \overline{G}_1^{-1}) + W_2(F \overline{G}_2^{-1})
\end{align*}
(see~\eqref{eq:energy-deformation}). Here $W_i$ and $W_g$ are the continuum energies of the initial and rehomogenised lattices.

First, in Section~\ref{sec:homogeneous_lattices_recombining_inhomogeneously}, we consider lattices that are homogeneous prior to (but not after) recombination, and demonstrate that the energy-deformation decomposition is stable under homogenisation of growth. Then, in Section~\ref{sec:inhomogeneous_lattices}, we consider lattices that are inhomogeneous both before and after recombination and show that the energy-deformation decomposition is stable also under homogenisation of elasticity.

\subsection{Homogeneous $(\Z^2_N,\{e_1,e_2,e_{1\pm2} \})$ recombining inhomogeneously}
\label{sec:homogeneous_lattices_recombining_inhomogeneously}

\paragraph{Deformation gradients}
In the simulations below both the initial and the recombined system are subjected to affine displacement at the boundary corresponding to a deformation gradient $F$ (which is thus also the average deformation gradient in the system). This is picked from a set of the form
\begin{equation} \label{eq:F-domain}
\F
 := \left\{ \begin{pmatrix} \l_1 & \l_3 \\ 0 & \l_2 \end{pmatrix} \middle| \ \l_1, \l_2 \in [\frac{1}{\Lambda_d}, \Lambda_d], \l_3 \in [-\Lambda_s,\Lambda_s] \right\}
\end{equation}
where $\Lambda_d, \Lambda_s > 1$. 
However, to permit two-dimensional plots the figures below are plotted for deformations of the form
\begin{subequations} \label{eq:deformations}
\begin{align}
F^h(\l)
 &= \begin{pmatrix} \l & 0 \\ 0 & 1 \end{pmatrix}, \quad \l \in [\frac{1}{\Lambda_h}, \Lambda_h], \\
F^v(\l)
 &= \begin{pmatrix} 1 & 0 \\ 0 & \l \end{pmatrix}, \quad \l \in [\frac{1}{\Lambda_v}, \Lambda_v], \\
F^d(\l)
 &= \begin{pmatrix} \l & 0 \\ 0 & \l \end{pmatrix}, \quad \l \in [\frac{1}{\Lambda_d}, \Lambda_d], \\
F^s(\l)
 &= \begin{pmatrix} 1 & \l \\ 0 & 1 \end{pmatrix}, \quad \l \in [-\Lambda_s,\Lambda_s].
\end{align}
\end{subequations}
(The superscripts stand for ``horizontal'', ``vertical'', ``dilational'' and ``shear'' respectively.)

\paragraph{Existence of $W_1, W_2 \colon \R^{2 \times 2} \to \R$ and $\overline{G}_1,\overline{G}_2 \in \R^{2 \times 2}$}
$W_1$ and $W_2$ can be immediately picked from Lemma~\ref{lem:!unique_decomposition}. For $\overline{G}_1,\overline{G}_2 \in \R^{2 \times 2}$, we search for those that minimises the relative mean square error
\begin{equation*}
 \sum_{F \in \F} \left( \frac{W_g(F) - \left( W_1(F G_1^{-1}) + W_2(F G_2^{-1}) \right)}{W_g(F)} \right)^2
\end{equation*}
(where the sum is approximated by a uniform sampling over $\F$). We accept these to be the desired $\overline{G}_1,\overline{G}_2$ provided
\begin{enumerate}
\item The error is less than a tolerance, and
\item $\overline{G}_1$ and $\overline{G}_2$ are convergent, and the error is non-increasing, when the grid size increases. 
\end{enumerate}

We also compare the resulting $\overline{G}_1,\overline{G}_2$ with the best choices, for each $\l$, for the deformations in~\eqref{eq:deformations}. For example, $\overline{G}^s_1(\l),\overline{G}^s_2(\l)$ minimises
\begin{equation} \label{mse}
 \sum_{\l \in [-\Lambda_s,\Lambda_s]} \left( \frac{ W_g(F^s(\l)) - \left( W_1(F^s(\l) G_1^{-1}) + W_2(F^s(\l) G_2^{-1}) \right) }{W_g(F^s(\l))} \right)^2,
\end{equation}
and we compare it with $\overline{G}_1,\overline{G}_2$.

For the spring energy $W$ (see~\eqref{eq:recombination-energy}) we pick
\begin{equation} \label{Wqpick}
W(\cdot) = (\cdot - 1)^q
\end{equation}
with $q=2, 3, 4$, depending on the simulation.
 
\begin{simulation}[$(\Z^2_{16},\{e_1,e_2,e_{1\pm2} \})$ with alternating growth in diagonal springs] \label{sim:Baskin}  \par \indent

\textbf{Initial state:} $L_1 = L_2 = 1$ and $L_{1+2} = L_{1-2} = \sqrt{2}$. Note that the rest-state has zero energy, that is, is residual-stress free.

\textbf{Recombination}: $g_1 = g_2 = 1$, that is, the horizontal and vertical springs did not grow. $g_{1+2}$ was chosen to alternate between $1.2$ and $\sqrt{2-1.2^2}$ in a checkerboard fashion, and $g_{1-2}$ was chosen to be $\sqrt{2-g_{1-2}^2}$. Note that each unit cell of the recombined lattice possesses a zero-energy state. 

\textbf{Result}: We choose $W_1$ and $W_2$ as in~\eqref{eq:energy-deformation1}, with $q=2$ in \eqref{Wqpick}. Since $g_1 = g_2 = 1$, we  choose $\overline{G}_1$  to be of the form $\bar{\g}_1I$. From~\eqref{eq:energy-deformation1b} $G_2$ is of the form
\begin{equation*}
R_{\frac{\pi}{4}} \begin{pmatrix} \bar{\g}_{1+2} & 0 \\ 0 & \bar{\g}_{1-2} \end{pmatrix} R_{\frac{\pi}{4}}^T
\end{equation*}
Since the system possesses cubic symmetry we expect $\bar{\g}_{1+2} = \bar{\g}_{1-2}$, but we do not a priori impose this.

Let us first impose a deformation $F^d(\l)$ for  $\l \in [1/1.25,1.25]$: $\bar{\g}_1^d = 1.2010$,  $\bar{\g}_{1+2}^d= \bar{\g}_{1-2}^d =  0.8300$  minimises the least-square relative error over all $\l$ in the chosen range.  Figure~\ref{ggssDA120920Fdsv4}(a) compares the true energy with the energy given by homogenisation for each $\l$, and the fractional error of the homogenized energy to the true energy as a function of $\l$ is shown in Figure~\ref{ggssDA120920Fdsv4}(c).

Next we deform the system by $F^s(\l)$ for  $\l \in [-0.25,0.25]$:  $\bar{\g}_1^s = 0.8970$, $\bar{\g}_{1+2}^s = \bar{\g}_{1-2}^s = 0.8020$ minimises the least-square relative error over all $\l$ in the chosen range. Figure~\ref{ggssDA120920Fdsv4}(b) compares the true energy with the energy given by homogenisation for each $\l$, and the fractional error of the homogenized energy to the true energy as a function of $\l$ is shown in Figure~\ref{ggssDA120920Fdsv4}(d).

\end{simulation}

\begin{figure}
\begin{center}
\includegraphics[width=0.75\textwidth]{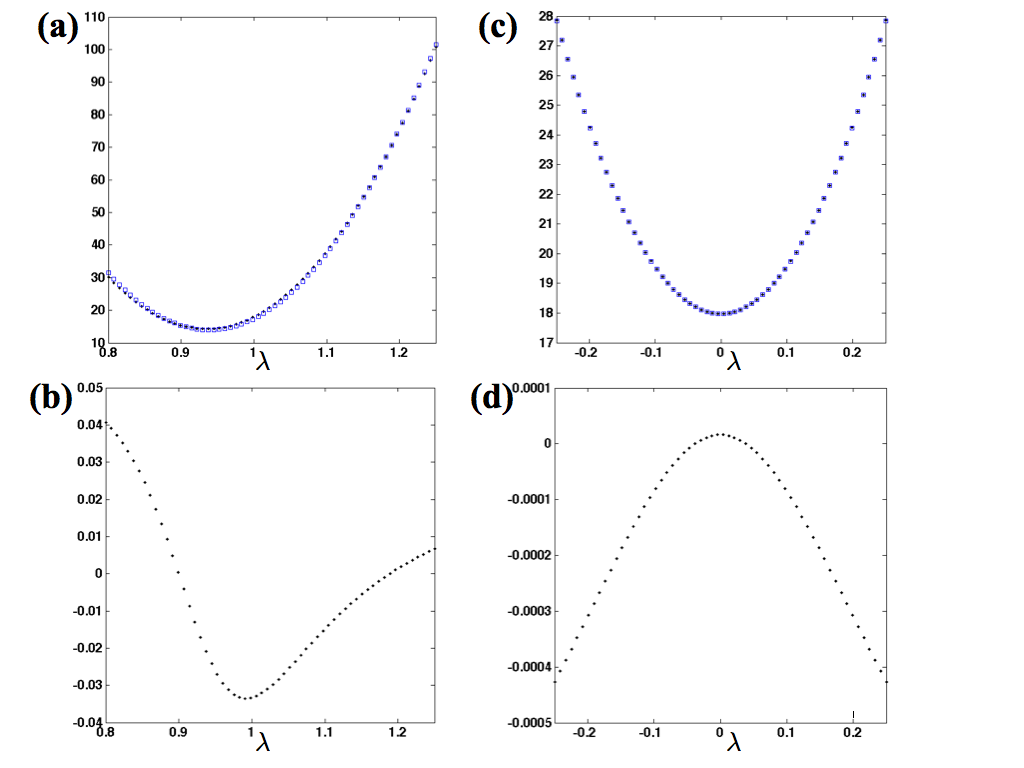}
\caption{Results of Simulation~\ref{sim:Baskin}.  The true energy (black dots) and the homogenized energy (blue squares) are plotted as functions of $\lambda$ for deformations of the form (a) $F^d(\l)$ and (c) $F^s(\l)$.  The fractional error of the homogenized to true energy are plotted in (b) for the results in (a) and in (d) for the results in (c).}
\label{ggssDA120920Fdsv4}
\end{center}
\end{figure}

In Simulation~\ref{sim:Baskin}, the tolerance associated with the choice of $\bar{G}_1,\bar{G}_2$ is $4 \%$ error; in the simulations that follow the maximum error is about $10\%$.  Table~\ref{table} gives computations of $\bar{\g}_1^d$ and $\bar{\g}_{1+2}^d, \bar{\g}_{1-2}^d$ and the mean squared error (\ref{mse}) for simulations similar to Simulation~\ref{sim:Baskin}, but with grids $(\Z^2_N,\{e_1,e_2,e_{1\pm2} \})$ of varying size.

\begin{table}
\centering
\begin{tabular}{|c|c|c|c|}
$N$ & $\bar{\g}_1^d$ & $\bar{\g}_{1+2}^d = \bar{\g}_{1-2}^d$ & \text{mean squared error (\ref{mse}) for $(\Z^2_N,\{e_1,e_2,e_{1\pm2} \})$} \\
\hline
$2^3$ & 1.2050 & 0.8270 & 0.1405 \\
$2^4$ & 1.2010 & 0.8300 & 0.1676 \\
$2^5$ & 1.1970 & 0.8300 & 0.1618 \\
$2^6$ & 1.1980 & 0.8300 & 0.1600 \\  
\hline
\end{tabular}
\caption{Computations of $\bar{\g}_1^d$ and $\bar{\g}_{1+2}^d, \bar{\g}_{1-2}^d$ for simulations identical to Simulation~\ref{sim:Baskin} except for grid size. The mean squared error (\ref{mse}) is calculated over 60 equidistant choices of $\lambda$ between 0.8 and 1.25.}
\label{table}
\end{table}
 
\begin{remark} Simulation~\ref{sim:Baskin} is related to a situation that occurs in the growth of plant cells; see~\cite{Baskin:2005}. For a cluster of neighbouring cells with the same cellulose microfibril orientation, the growth of the plant cell is related to the microfibril orientation. However when there is a large variation in microfibril orientation in a cluster of cells, then microfibril orientation of a cell is less correlated with the growth of that cell. In this case, as in Simulation~\ref{sim:Baskin}, the homogenised growth of the cell closer is related to the homogenised microfibril orientation.
\end{remark}

\begin{simulation}[$(\Z^2_{16},\{e_1,e_2,e_{1\pm2} \})$ with growth in all springs uniformly distributed about 1] \label{sim:S0v4}  \par \indent

\textbf{Initial state:} $L_1 = L_2 = 1$ and $L_{1+2} = L_{1-2} = \sqrt{2}$. Note that the rest-state has zero energy, that is, is residual-stress free.

\textbf{Recombination}: $g_1,g_2,g_{1+2},$ and $g_{1-2}$ were all chosen from uniform distributions on $[0.8,1.2]$.  In contrast to Simulation \ref{sim:Baskin}, generically
\begin{equation*}
 \left( \sqrt{2} g_{1+2} \right)^2 + \left( \sqrt{2} g_{1-2} \right)^2 \neq 2 \left( g_1^2 + g_2^2 \right),
\end{equation*}
so unit cells of the recombined lattice generically do not have a zero-energy state. Thus growth leads to residual stress. 

\textbf{Result}: We choose $W_1$ and $W_2$ as in~\eqref{eq:energy-deformation1}, with $q=2$ in \eqref{Wqpick}. We  choose $\overline{G}_1$  to be of the form $\bar{\g}_1 I$ and $\overline{G}_2$ to be of the form $\bar{\g}_2 I $.  

Let us first impose a deformation $F^d(\l)$ for  $\l \in [1/1.5,1.5]$: $\bar{\g}_1^d = 1.0240$, $\bar{\g}_2^d= 0.8600$ minimises the least-square relative error over all $\l$ in the chosen range. Figure~\ref{ggssHVDS0120918FdsUv2}(a) compares the true energy with the energy given by homogenisation for each $\l$, and the fractional error of the homogenized energy to the true energy as a function of $\l$ is shown in Figure~\ref{ggssHVDS0120918FdsUv2}(c).

Next we deform the system by $F^s(\l)$ for  $\l \in [-0.5,0.5]$:   $\bar{\g}_1^s = 1.050$ and $\bar{\g}_2^s=0.993$ minimises the least-square relative error over all $\l$ in the chosen range. Figure~\ref{ggssHVDS0120918FdsUv2}(b) compares the true energy with the energy given by homogenisation for each $\l$ , and the fractional error of the homogenized energy to the true energy as a function of $\l$ is shown in Figure~\ref{ggssHVDS0120918FdsUv2}(d).

Repeating this simulation, but choosing $g_1, g_2$  from uniform distributions on $[1-\delta g_{hv},1+\delta g_{hv}]$ and $g_{1+2}, g_{1-2}$ from uniform distributions on $[1-\delta g_d,1+\delta g_d]$, we plot the resulting homogenization factors $\bar{\g}_1^d$ and  $\bar{\g}_2^d$  in Figure~\ref{ggssHVDS0120918FdsUv2n2}. 

Results using $q=3$ ($\bar{\g}_1^d = 1.0100$, $\bar{\g}_2^d= 0.9590$) and $q=4$ ($\bar{\g}_1^d = 1.0060$, $\bar{\g}_2^d= 0.9640$) and deformations $F^d$ are shown in Fig.~\ref{5p5v34}.  

\end{simulation}

\begin{figure}
\begin{center}
\includegraphics[width=0.75\textwidth]{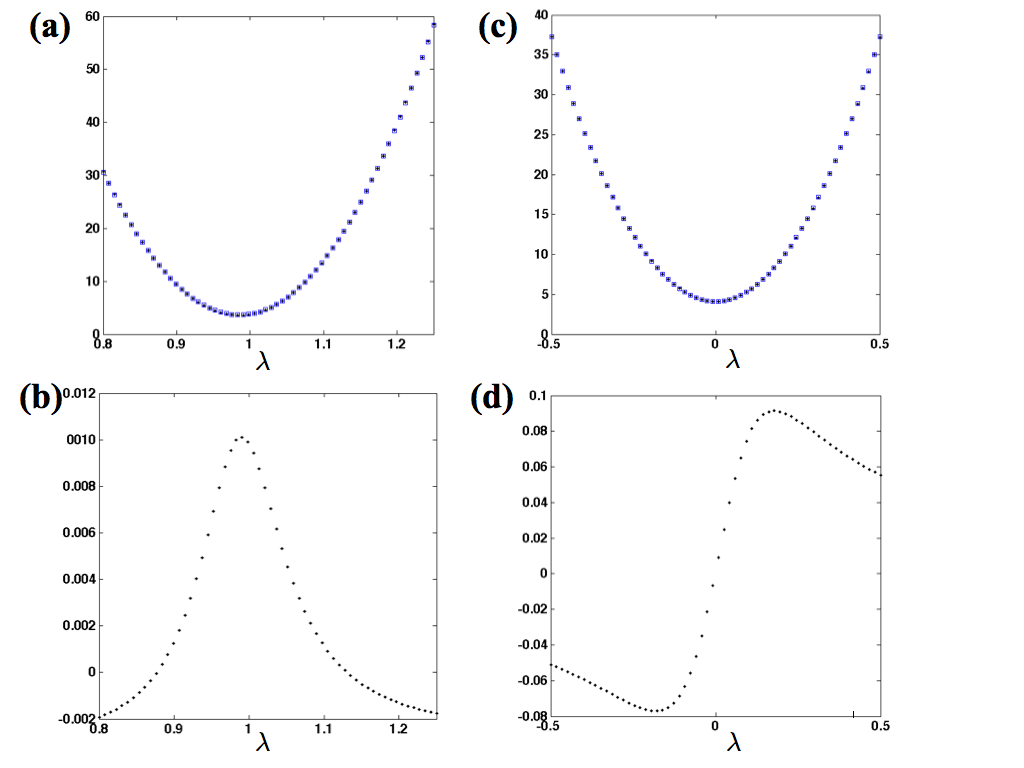}
\caption{Results of Simulation~\ref{sim:S0v4} for $q=2$.   The true energy (black dots) and the homogenized energy (blue squares) are plotted as functions of $\lambda$ for deformations of the form (a) $F^d(\l)$ and (c) $F^s(l)$.  The fractional error of the homogenized to true energy are plotted in (b) for the results in (a) and in (d) for the results in (c).}
\label{ggssHVDS0120918FdsUv2}
\end{center}
\end{figure}

\begin{figure}
\begin{center}
\includegraphics[width=0.5\textwidth]{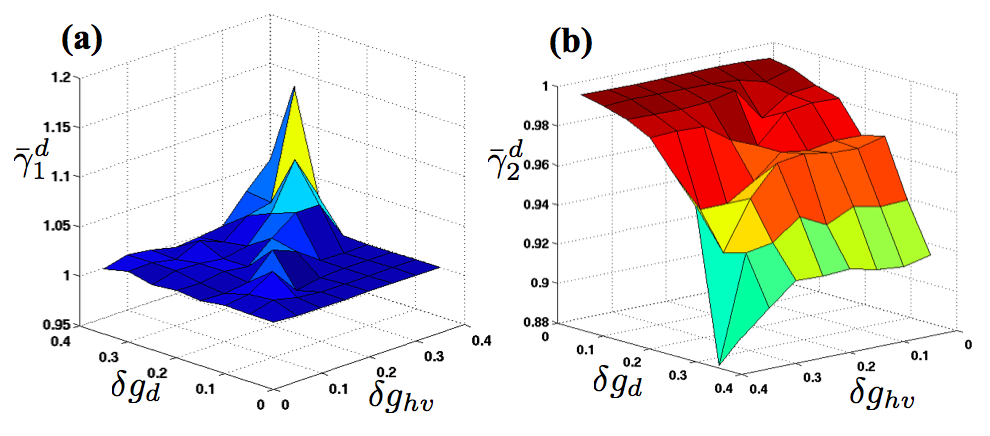}
\caption{Results of Simulation~\ref{sim:S0v4}: (a) $\bar{\g}_1^d$ and (b) $\bar{\g}_2^d$  as functions of $\delta g_{hv}$ and $\delta g_d$ (see text).}
\label{ggssHVDS0120918FdsUv2n2}
\end{center}
\end{figure}

\begin{figure}
\begin{center}
\includegraphics[width=0.75\textwidth]{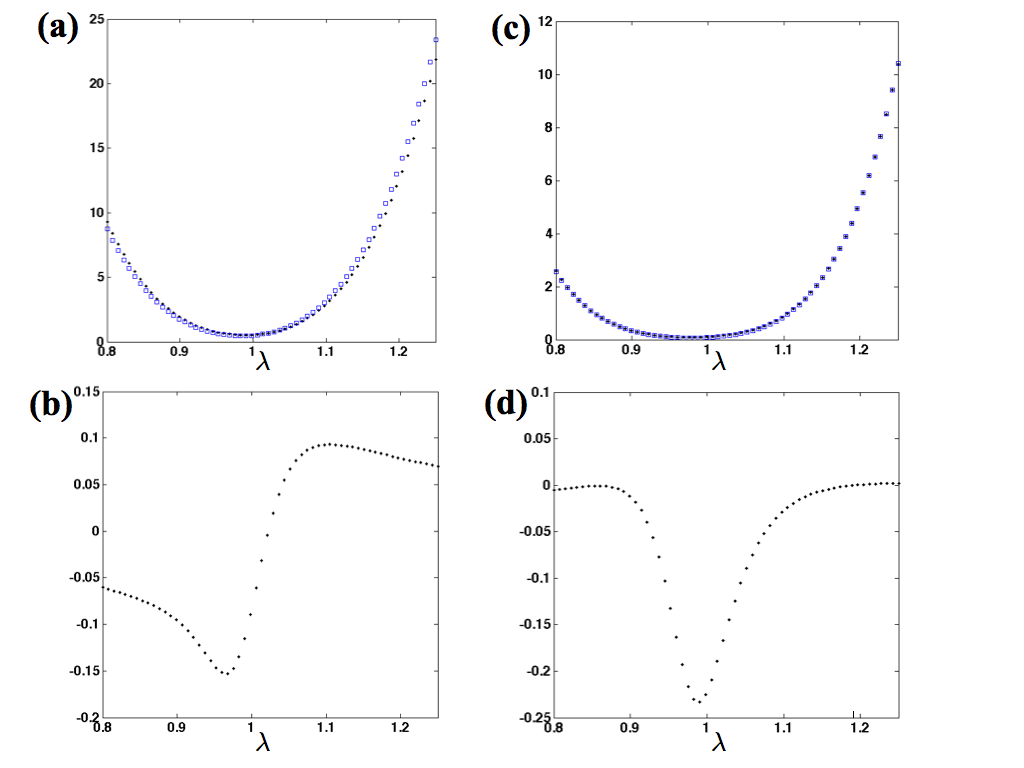}
\caption{Results of Simulation~\ref{sim:S0v4} for (a,b) $q=3$ and (c,d) $q=4$.  The true energy (black dots) and the homogenized energy (blue squares) are plotted as functions of $\lambda$ for deformations of the form $F^d(\l)$.  The fractional error of the homogenized to true energy are plotted in (b) for the results in (a) and in (d) for the results in (c).}
\label{5p5v34}
\end{center}
\end{figure}

\begin{simulation}[$(\Z^2_{16},\{e_1,e_2,e_{1\pm2} \})$ with residual stress; growth in all springs uniformly distributed about 1] \label{sim:S0v4vRS}  \par \indent

\textbf{Initial state:} $L_1 = L_2 = 1$ and $L_{1+2} = L_{1-2} = 1.25$. Note that the rest-state has non-zero energy, that is, it has residual stress.

\textbf{Recombination}: $g_1,g_2,g_{1+2},$ and $g_{1-2}$ were all chosen from uniform distributions on $[0.8,1.2]$. Since, generically,
\begin{equation*}
 \left( 1.25 g_{1+2} \right)^2 + \left( 1.25 g_{1-2} \right)^2 \neq 2 \left( g_1^2 + g_2^2 \right),
\end{equation*}
generically, unit cells of the recombined lattice do not have a zero-energy state. Thus the grown system is also residually stressed. 

\textbf{Result}: We choose $W_1$ and $W_2$ as in~\eqref{eq:energy-deformation1}, with $q=2$ in \eqref{Wqpick}. We  choose $\overline{G}_1$  to be of the form $\bar{\g}_1I$ and $\overline{G}_2$ to be of the form $\bar{\g}_2 I $.  

Let us first impose a deformation $F^d(\l)$ for  $\l \in [1/1.5,1.5]$: $\bar{\g}_1^d = 1.132$, $\bar{\g}_2^d= 0.957$ minimises the least-square relative error over all $\l$ in the chosen range. Figure~\ref{ggssHVDS0120918FdsUvRS}(a) compares the true energy with the energy given by homogenisation for each $\l$, and the fractional error of the homogenized energy to the true energy as a function of $\l$ is shown in Fig.~\ref{ggssHVDS0120918FdsUvRS}(c).

Next we deform the system by $F^s(\l)$ for  $\l \in [-0.5,0.5]$: $\bar{\g}_1^s = 1.007$ and $\bar{\g}_2^s=1.085$ minimises the least-square relative error over all $\l$ in the chosen range. Fig.~\ref{ggssHVDS0120918FdsUvRS}(b) compares the true energy with the energy given by homogenisation for each $\l$ , and the fractional error of the homogenized energy to the true energy as a function of $\l$ is shown in Fig.~\ref{ggssHVDS0120918FdsUvRS}(d).

\end{simulation}

\begin{figure}
\begin{center}
\includegraphics[width=0.75\textwidth]{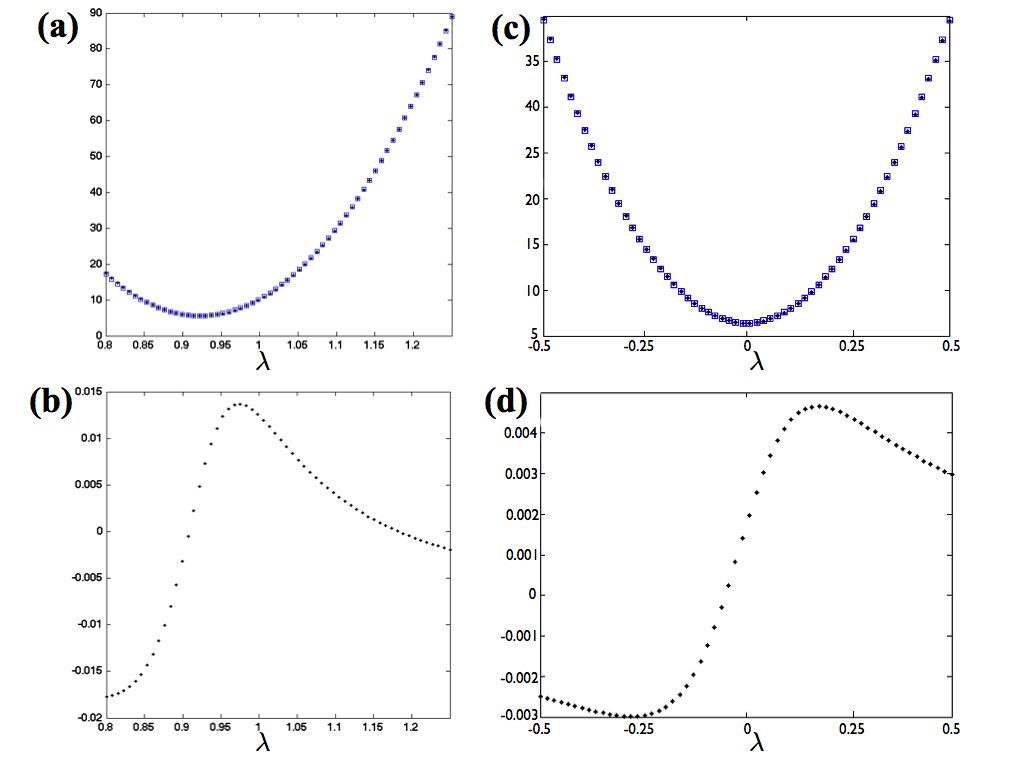}
\caption{Results of Simulation~\ref{sim:S0v4vRS}.  The true energy(black dots) and the homogenized energy (blue squares) are plotted as functions of $\lambda$ for deformations of the form (a) $F^d(\l)$ and (c) $F^s(l)$.  The fractional error of the homogenized to true energy are plotted in (b) for the results in (a) and in (d) for the results in (c).}
\label{ggssHVDS0120918FdsUvRS}
\end{center}
\end{figure}

\subsection{Inhomogenenous $(\Z^2_N,\{e_1,e_2,e_{1\pm2} \})$ recombining inhomogenenously}
\label{sec:inhomogeneous_lattices}

In the simulations of Section~\ref{sec:homogeneous_lattices_recombining_inhomogeneously} we were guided by Lemma~\ref{lem:!unique_decomposition} in our choice of the the energies $W_1$ and $W_2$ in the decomposition. For all three simulations we picked these as in~\eqref{eq:energy-deformation1}, but~\eqref{eq:energy-deformation2} and~\eqref{eq:energy-deformation3} would have served just as well.

When the initial lattice is inhomogeneous we determine $W_1$ and $W_2$ from the homogeneous representative of the initial lattice. Such a homogeneous lattice has a continuum energy of the form 
\begin{equation*}
 \sum_{ \alpha \in \{1,2,1+2,1-2\} } W \left( \frac{\| F e_\alpha \|}{\overline{L}_\alpha} \right).
\end{equation*}
and thus it remains to find the optimal homogenised rest-lengths $\overline{L}_\alpha$, $\alpha \in \{1,2,1+2,1-2\}$. We do this by picking these to minimise the relative mean square error
\begin{equation*}
\sum_{F \in \F} \left( \frac{W_i(F) - \left( \sum_{ \alpha \in \{1,2,1+2,1-2\} } W \left( \frac{\| F e_\alpha \|}{\overline{L}_\alpha} \right) \right)}{W_i(F)} \right)^2
\end{equation*}
(where the outer sum is approximated by a uniform sampling over $\F$). We accept these to be the desired rest-lengths provided that
\begin{enumerate}
\item The error is less than a tolerance, and
\item the rest-lengths are convergent, and the error is non-increasing, when the grid size increases. 
\end{enumerate}

We now allow the lattice to recombine (inhomogeneously) and proceed as in Section~\ref{sec:homogeneous_lattices_recombining_inhomogeneously}.

\begin{simulation}[$(\Z^2_{16},\{e_1,e_2,e_{1\pm2} \})$ with nonuniform rest lengths; growth in all springs uniformly distributed about 1] 
\par \indent

\textbf{Initial state:} $L_1$ and  $L_2 $ were chosen from uniform distributions on $[0.8,1.2]$.  $L_{1+2}$ and $L_{1-2}$ were chosen from uniform distributions on $[0.8\sqrt{2},1.2\sqrt{2}]$. Note that the rest-state has non-zero energy, that is, it has residual stress.

\textbf{Recombination}: $g_1,g_2,g_{1+2},$ and $g_{1-2}$ were all chosen from uniform distributions on $[0.8,1.2]$. Since, generically,
\begin{equation*}
 \left( 1.25 g_{1+2} \right)^2 + \left( 1.25 g_{1-2} \right)^2 \neq 2 \left( g_1^2 + g_2^2 \right),
\end{equation*}
generically, unit cells of the recombined lattice do not have a zero-energy state. Thus the grown system is also residually stressed. 

\textbf{Result}: We choose $W_1$ and $W_2$ as in~\eqref{eq:energy-deformation1}, with $q=2$ in \eqref{Wqpick}.  

Let us first impose deformations $F^d(\l)$ for  $\l \in [1/1.5,1.5]$:  Homogenised rest lengths $\bar{L}_1 = \bar{L}_2=1.0820$ and $\bar{L}_{1+2} = \bar{L}_{1-2} = 0.8790$ minimise the least-square relative error over all $\l$ in the chosen range (with 4$\%$ maximum fractional error). Allowing the springs to grow, homogenised growths of the form $\overline{G}_1=\bar{\g}_1I$ and $\overline{G}_2=\bar{\g}_2 I $ with $\bar{\g}_1^d = 1.0560$ and $\bar{\g}_2^d = 1.0330$ minimise the least-square relative error over all $\l$ in the chosen domain (with 5$\%$ maximum fractional error). 

Next we  deform the system by $F^s(\l)$ for  $\l \in [-0.5,0.5]$:  Homogenised rest lengths $\bar{L}_1 = \bar{L}_2=1.0260$ and $\bar{L}_{1+2} = \bar{L}_{1-2} = 0.8530$ minimise the least-square relative error over all $\l$ in the chosen range (with 2.5$\%$ maximum fractional error). Allowing the springs to grow, homogenised growths of the form $\overline{G}_1=\bar{\g}_1I$ and $\overline{G}_2=\bar{\g}_2 I $ with $\bar{\g}_1^d = 1.1930$ and $\bar{\g}_2^d = 0.8110$ minimise the least-square relative error over all $\l$ in the chosen domain (with 6$\%$ maximum fractional error). 
\end{simulation}

\section{Outlook: Recombining networks and continua}
\label{sec:continua}

While the precise analysis above is limited to lattices, the central insights regarding multiplicative decompositions and energy-deformation decompositions hold for networks in general. (By a network we mean a system of springs connected through nodes, a lattice being a special case.)

This follows from the observation that in a $D$-dimensional network that is able to resist shears a \emph{generic} node should be connected to at least $D+1$ springs, see Figure~\ref{fig:networks}. It follows, from the reasoning in Section~\ref{sec:homogeneous_lattices_recombining_homogeneously}, that the continuum energy of such a network cannot obey a multiplicative decomposition.

\begin{figure}
\begin{center}
\begin{subfigure}{0.3\textwidth}
\includegraphics[width=\textwidth]{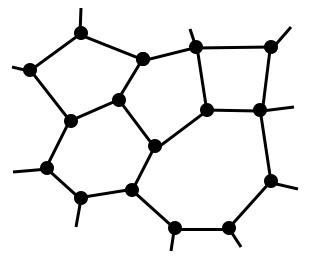}
\caption{}
\label{fig:network2}
\end{subfigure}
\begin{subfigure}{0.3\textwidth}
\includegraphics[width=\textwidth]{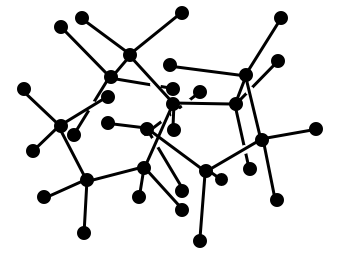}
\begin{center}  \end{center}
\caption{}
\label{fig:network3}
\end{subfigure}
\caption{Shear-resisting networks in two and three dimensions where each node is minimally connected.}
\label{fig:networks}
\end{center}
\end{figure}

On the other hand when each node of a network is connected to no more than $2D$ other nodes then, reasoning as in the proof of Theorem~\ref{thm:Z^D}, we are led to hope that two energy-densities and two growth descriptors would describe the continuum energy-density of the network. Thus we hypothesise:

\begin{hypothesis}[$D$-dimensional shear-resisting networks with connectivity no more than $2D$]
\label{hyp:recombining_networks}
The initial continuum energy-density $W_i$ of a $D$-dimensional shear-resisting network with connectivity no more than $2D$ is related to its recombined continuum energy-density $W_g$ through
\begin{align}
W_i(\cdot) &= W_1(\cdot) + W_2(\cdot), \tag{\ref{eq:energy-deformation-initial}} \\
W_g(\cdot) &= W_1(\cdot G_1^{-1}) + W_2(\cdot G_2^{-1}). \tag{\ref{eq:energy-deformation-grown}}
\end{align}
for some $W_1, W_2 \colon \Def \to \R$, each of which is independent of growth and vanishes on a shear, and some $G_1,G_2 \in \Def$ which are independent of the deformation.
\end{hypothesis}

From~\eqref{eq:energy-deformation},
\begin{equation*}
W_g(F)
 = W_i(F G_2^{-1}) + \left( W_1(F G_1^{-1}) - W_1(F G_2^{-1}) \right).
\end{equation*}
It is convenient to set
\begin{align}
H &:= G_2^{-1}, \notag \\
H' &:= G_2 G_1^{-1}, \notag \\
W'(F,\cdot) &:= W_1(F\cdot) - W_1(F). \label{eq:energy_correction}
\end{align}
This gives,
\begin{equation} \label{eq:continuum_energy}
W_g(F)
 = W_i(FH) + W'(FH,H').
\end{equation} 

(Alternatively, from~\eqref{eq:energy-deformation},
\begin{equation*}
W_g(F)
 = W_i(F G_1^{-1}) + \left( W_2(F G_2^{-1}) - W_2(F G_1^{-1}) \right).
\end{equation*}
and with the alternate definitions
\begin{align*}
H &:= G_1^{-1}, \\
H' &:= G_1 G_2^{-1}, \\
W'(F,\cdot) &:= W_2(F\cdot) - W_2(F).
\end{align*}
we again obtain~\eqref{eq:continuum_energy}.)

Observe that $H$ and $H'$ are measures of growth with the later measuring the extent to which growth is anisotropic. $W'$ measures the extent to which the energy of the grown body differs from a motion of the initial energy. When the growth is isotropic then $H'=I$ and $W'(\cdot,I) = 0$ thus recovering the multiplicative decomposition as a special case.

We investigate this formulation further in~\cite{Chenchiah-Shipman2}. There we also investigate the question of which of the energy densities that have been proposed for soft tissues (see, for example, \cite{Bogen:1979,Chaplain:1993,Delfino:1997,Holzapfel:2000,Sacks:2000,Horgan:2002p353,Horgan:2003,Shergold:2006}) are compatible with Hypothesis~\ref{hyp:recombining_networks}. One natural way to pursue this question is via the following definition:

\begin{definition}[Shear decomposability]
$W \colon \Def \to \R$ is shear decomposable of order $K$ if there exist $W_i \colon \Def \to \R$, $i=1,\dots,K$, such that
\begin{enumerate}
\item Each $W_i$, $i=1,\dots, K$, vanishes on a shear, and
\item $W = \sum_{i=1}^K W_i$.
\end{enumerate}
When $K=2$ we shall omit ``of order two'' and say ``shear decomposable''. 
\end{definition}

The significance of this definition is, of course, the following corollary of Hypothesis~\ref{hyp:recombining_networks}:

\begin{corollary}
Let $W \colon \Def \to \R$ be the energy-density of a recombining $D$-dimensional shear-resisting network with connectivity no more than $2D$. Then $W$ is shear decomposable.
\end{corollary}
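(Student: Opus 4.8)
The plan is to obtain the corollary as an immediate reading of Hypothesis~\ref{hyp:recombining_networks}. Writing $W$ for the continuum energy-density in question and identifying it with the $W_i$ of the Hypothesis (the recombined case is identical, since the recombined network is again a $D$-dimensional shear-resisting network with connectivity at most $2D$), the Hypothesis asserts the existence of $W_1, W_2 \colon \Def \to \R$, \emph{each of which vanishes on a shear}, together with $G_1, G_2 \in \Def$, such that $W = W_i = W_1 + W_2$. This is precisely what the definition of shear decomposability demands: its condition~(1) is the shear-vanishing of $W_1$ and $W_2$, its condition~(2) is the identity $W = W_1 + W_2$, and the number of summands is $K = 2$. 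Hence $W$ is shear decomposable of order two, and therefore shear decomposable.

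So the steps are short: (i) invoke Hypothesis~\ref{hyp:recombining_networks} to produce the pair $(W_1, W_2)$ (and the pair $(G_1,G_2)$, which is not needed here); (ii) check that $W_1$, $W_2$ and the relation $W = W_1 + W_2$ meet conditions~(1) and~(2) of the definition of shear decomposability with $K = 2$; (iii) conclude. One may optionally add the remark that the grown energy-density is also shear decomposable, either by applying the Hypothesis directly to the recombined network or by noting that $W_g = W_1(\cdot G_1^{-1}) + W_2(\cdot G_2^{-1})$ and that each summand is again a ``spring-type'' energy of order one in the sense of Section~\ref{sec:homogeneous_lattices_recombining_homogeneously}, hence vanishes on a shear.

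The honest statement about difficulty is that there is no obstacle internal to this corollary; all the content lies in Hypothesis~\ref{hyp:recombining_networks}, which is itself a conjecture, extrapolated from Theorem~\ref{thm:Z^D} and the minimal-connectivity count of Figure~\ref{fig:networks}. The only point that deserves a comment is why the order is \emph{two}: it is the connectivity bound $2D$ that, by the partitioning argument used to prove Theorem~\ref{thm:Z^D} (splitting the at most $2D$ springs meeting a generic node into two families of linearly independent vectors), makes $K = 2$ the expected value rather than some larger $K$, and hence lets us land in the unqualified notion of shear decomposability rather than merely ``of order $K$''.
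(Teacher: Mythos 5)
Your proposal is correct and matches the paper's intent exactly: the paper offers no separate proof of this corollary, presenting it as an immediate consequence of Hypothesis~\ref{hyp:recombining_networks}, whose conclusion $W_i = W_1 + W_2$ with each $W_k$ vanishing on a shear is verbatim the definition of shear decomposability with $K=2$. Your closing remarks on the grown energy and on why $K=2$ are consistent with the paper's discussion surrounding Theorem~\ref{thm:Z^D} and Figure~\ref{fig:networks}.
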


So the question can be rephrased:

\begin{question}
Which of the energy densities that have been proposed for soft tissues are shear decomposable?
\end{question}

\bigskip

\noindent \textbf{Acknowledgement:} PS was supported in part by National Science Foundation grant DMS-1022635. This work grew out of discussions at the Max-Planck-Institute for Mathematics in the Sciences, Leipzig. We thank Stefan Mueller for encouragement to work on the mechanics of biological growth, and Alain Goriely for encouraging us to pursue this line of research.

\bibliographystyle{elsarticle-harv}
\bibliography{Papers,Manuscripts}

\end{document}